  \theoremstyle{plain}% default
  \newtheorem{theorem}{Theorem}%[chapter]
  \newtheorem{lemma}[theorem]{Lemma}
  \theoremstyle{definition}
  \newtheorem{definition}[theorem]{Definition}
  \theoremstyle{remark}
\tikzstyle{box}=[draw, fill=blue!20, scale= 0.8, minimum size=2em]
\tikzstyle{circ}=[draw, circle, fill=red!20, scale= 0.8, minimum size=2em]
\tikzstyle{circ_blue}=[draw, circle, fill=blue!20, scale= 0.8, minimum size=2em]
\tikzstyle{box_red}=[draw, fill=red!20, minimum size=2em, scale= 0.8]
\tikzstyle{vecArrow} = [thick, decoration={markings,mark=at position
\tikzstyle{innerWhite} = [semithick, white,line width=1.4pt, shorten >= 4.5pt]
\newcommand{\bA}{\mathbf{A}}
\newcommand{\bB}{\mathbf{B}}
\newcommand{\bQ}{\mathbf{Q}}
\newcommand{\bR}{\mathbf{R}}
\newcommand{\bU}{\mathbf{U}}
\newcommand{\bV}{\mathbf{V}}
\newcommand{\bW}{\mathbf{W}}
\newcommand{\bX}{\mathbf{X}}
\newcommand{\bY}{\mathbf{Y}}
\newcommand{\bZ}{\mathbf{Z}}
\newcommand{\bu}{\mathbf{u}}
\newcommand{\bv}{\mathbf{v}}
\newcommand{\bx}{\mathbf{x}}
\newcommand{\by}{\mathbf{y}}
\newcommand{\bz}{\mathbf{z}}
\newcommand{\cF}{\mathcal{F}}
\newcommand{\cI}{\mathcal{I}}
\newcommand{\cM}{\mathcal{M}}
\newcommand{\cX}{\mathcal{X}}
\global\long\def\dd{\mathrm{d}}
\newcommand{\one}{\boldsymbol{1}} 
\newcommand{\goto}{\rightarrow} 
\newcommand{\bEx}{\ensuremath{\mathbb{E}}}
\newcommand{\ex}[1]{\ensuremath{\mathbb{E}\left[ #1\right]}}
\newcommand{\pr}[1]{\ensuremath{\mathbb{P}\left[ #1\right]}}
\DeclareMathOperator{\cov}{\mathsf{Cov}}
\DeclareMathOperator{\mmse}{\sf mmse}
\DeclareMathOperator{\var}{\sf Var}
\DeclareMathOperator{\snr}{\sf snr}
\DeclareMathOperator{\BG}{\mathsf{BG}}
\newcommand{\DKL}[2]{\ensuremath{D\left( #1 \, \|  \, #2 \right)}}
\DeclareMathOperator{\gtr}{tr}
\newcommand{\reals}{\mathbb{R}}
\newcommand{\eps}{\epsilon}
\newcommand{\normal}{\mathcal{N}}
\newcommand{\gmid}{\! \mid \!}
\newcommand{\Id}{\mathbf{I}}
\let\originalleft\left
\let\originalright\right
\renewcommand{\left}{\mathopen{}\mathclose\bgroup\originalleft}
\renewcommand{\right}{\aftergroup\egroup\originalright}
\title{Understanding Phase Transitions via\\ Mutual Information and MMSE} 
\author{Galen Reeves \and Henry Pfister
%\IEEEauthorblockN{Galen Reeves\IEEEauthorrefmark{1}\IEEEauthorrefmark{2}, Vaishakhi Mayya\IEEEauthorrefmark{1}, and  Alexander Volfovsky\IEEEauthorrefmark{2}\\
%\thanks{This was supported in part by funding from the Laboratory for Analytic Sciences (LAS) and by the NSF under Grant No.~1750362.  Any opinions, findings, conclusions, and recommendations expressed in this material are those of the authors and do not necessarily reflect the views of the sponsors.}}
\thanks{G.~Reeves is with the Department of Electrical and Computer Engineering and the Department of Statistical Science, Duke University, Durham, NC 27708 USA (e-mail: galen.reeves@duke.edu). H.~Pfister is with the Department of Electrical and Computer Engineering, Duke University, Durham, NC 27708 USA (e-mail: henry.pfister@duke.edu).}
%\thanks{The work of G.\ Reeves was supported in part by funding from the Laboratory for Analytic Sciences (LAS).  The work of H. Pfister was supported part by the NSF under Grant No.~1545143.  Any opinions, findings, conclusions, and recommendations expressed in this material are those of the authors and do not necessarily reflect the views of the sponsors.}
}
\begin{document}

 \maketitle

%\chapter{Understanding Phase Transitions via Mutual Information and MMSE}

%\section*{Abstract}
\begin{abstract}
The ability to understand and solve high-dimensional inference problems is essential for modern data science.
This article examines high-dimensional inference problems through the lens of information theory and focuses on the standard linear model as a canonical example that is both rich enough to be practically useful and simple enough to be studied rigorously.
In particular, this model can exhibit phase transitions where an arbitrarily  small change in the model parameters can induce large changes in the quality of estimates. % and these phase transitions have practical consequences.
For this model, the performance of optimal inference can be studied using the replica method from statistical physics but, until recently, it was not known if the resulting formulas were actually correct.
In this chapter, we present a tutorial description of the standard linear model and its connection to information theory.
We also describe the replica prediction for this model and outline the authors' recent proof that it is exact.
\end{abstract}

\setcounter{tocdepth}{2}
  \tableofcontents

% !TEX root = arxiv_main.tex
%  Into (<1.5 pages)

\section{Introduction}
\subsection{What Can We Learn From Data?}

Given a probabilistic model, this question can be answered succinctly in terms of the difference  between the prior distribution (what we know before looking at the data)  and the posterior distribution (what we know after looking at the data). Throughout this chapter we will focus on high-dimensional inference problems where the posterior distribution may be complicated and difficult to work with directly. We will show how techniques rooted in information theory and and statistical physics can provide explicit characterizations of the statistical relationship between the data and the unknown quantities of interest. 

In his seminal paper, Shannon showed that \emph{mutual information} provides an important measure of the difference between the prior and posterior distributions~\cite{shannon:1948}. Since its introduction, mutual information has played a central role for  applications in engineering, such as communication and data compression, by describing  the fundamental constraints imposed solely by the \emph{statistical} properties of the problems. 

Traditional problems in information theory assume that all statistics are known and that certain system parameters can be chosen to optimize performance.
In contrast, data-science problems typically assume that the important distributions are either given by the problem or must be estimated from the data.
When the distributions are unknown, the implied inference problems are more challenging and their analysis can become intractable.
Nevertheless, similar behavior has also been observed in more general high-dimensional inference problems such as Gaussian mixture clustering~\cite{lesieur:2016}.

In this chapter, we use the standard linear model as a simple example to illustrate phase transitions in high-dimensional inference.
In Section~\ref{sec:setup}, basic properties of the standard linear model are described and examples are given to describe its behavior.
In Section~\ref{sec:role_of_MI}, a number of connections to information theory are introduced.
In Section~\ref{sec:proof}, we present an overview of the authors' proof that the replica formula for mutual information is exact.
In Section~\ref{sec:phase}, connections between posterior correlation and phase transitions are discussed.
Finally, in Section~\ref{sec:conclusion}, we offer concluding remarks.
%\textcolor{red}{Update this section to reflect current outline of material}}

\subsubsection{Notation}
The probability $\mathbb{P}[\bY=\by|\bX=\bx]$
is denoted succinctly by $p_{\bY|\bX} (\by|\bx)$ and shortened to $p(\by|\bx)$ when the meaning is clear.
Similarly, the distinction between discrete and continuous distributions is neglected when it is inconsequential. 

\subsection{High-Dimensional Inference} 

Suppose that the relationship between the unobserved random vector  $\bX = (X_1, \dots , X_N)$ and the observed random vector $\bY  = (Y_1, \dots, Y_M)$ is modeled by the joint probability distribution $p (\bx, \by)$. The central problem of Bayesian inference is to answer questions about the unobserved variables in terms of the posterior distribution:
\begin{align}
p(\bx \gmid \by)  =  \frac{ p(\by \gmid \bx) p(\bx) }{ p(\by)} .
\end{align}
In the high-dimensional setting where both $M$ and $N$ are large, direct evaluation of the posterior distribution can become intractable and one often resorts to summary statistics such as the posterior mean/covariance or the marginal posterior distribution of a small subset of the variables. 

%%%%%%
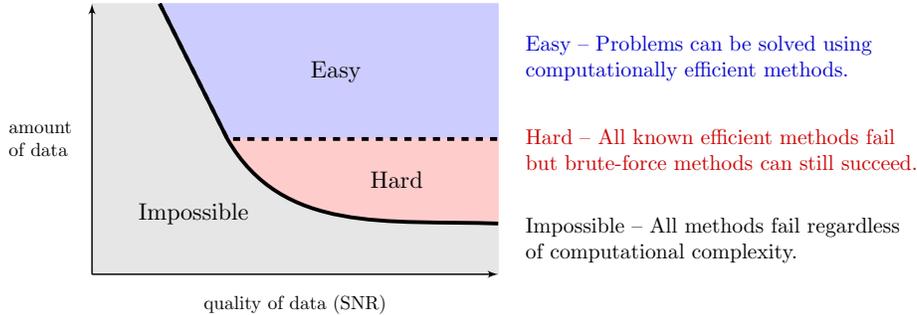
\begin{figure}[]
\centering
\scalebox{.9}{
\begin{tikzpicture}[>= latex']

\coordinate (A) at (1,4);
\coordinate (B) at (2,2);
\coordinate (C) at (6,2);
\coordinate (D) at (6,.75);

\draw[thin, fill,  color = black!10] (0,0) -- (0,4) -- (A) -- (B) to[out= -60,in=178] (D) -- (6,0) -- (0,0);
\draw[thin, fill,  color = blue!20]  (6,4) -- (A) -- (B) -- (C)  -- (6,4);
\draw[thin, fill,  color = red!20]  (B) to[out= -60,in=178] (D)-- (C)  -- (B);

\draw[semithick, ->] (0,0) -- coordinate (x axis mid) (6,0);
\draw[semithick, ->] (0,0) -- coordinate (y axis mid) (0,4);

\node[below=0.2cm, scale =0.8] at (x axis mid) {quality of data (SNR)};
\node[left=0.2 cm, scale =0.8, align=left] at (y axis mid) {amount\\ of data};

\draw[ultra thick, color = black] (A) -- (B) to[out= -60,in=178] (D);
\draw[ultra thick, dashed, color = black] (A) -- (B) -- (C);

\draw (3.6,3) node [scale = 1,  align = left] {Easy};
\draw (4.5,1.4) node [scale = 1] {Hard}; 
\draw (1.5,.9) node [scale = 1] {Impossible}; 

\draw (9.6,3.2) node [scale = 0.9, align = left, color = blue!80!black, text width =  7.1cm] {Easy -- Problems can be solved using \\ computationally efficient methods.};
\draw (9.6, 1.85) node [scale = 0.9, align = left, color =  red!80!black, text width = 7.1cm] {Hard -- All known efficient methods fail \\ but brute-force methods can still succeed.};
\draw (9.6,.5) node [scale = 0.9, align = left, color = black, text width = 7.1 cm] {Impossible -- All methods fail regardless \\ of computational complexity.};

\end{tikzpicture}
}
\caption{\label{fig:pd}Example phase diagram for a high-dimensional inference problem such as signal estimation for the standard linear model.
The parameter regions indicate the difficulty of inference of some fixed quality.
 } 
\end{figure}

The analysis of high-dimensional inference  problems focuses on two questions:
\begin{itemize}
\item What is the fundamental limit of inference without  computational constraints?
\item What can be inferred from data using computationally efficient methods?
\end{itemize}
It is becoming increasingly common for the answers to these questions to be framed in terms of \emph{phase diagrams}, which provide important information about fundamental tradeoffs involving the amount and quality of data. For example, the phase diagram in Figure~\ref{fig:pd} shows that increasing the amount of data not only provides more information, but also moves the problem into a regime where efficient methods are optimal. By contrast, increasing the SNR may lead to improvements that can be attained only with significant computational complexity.

\subsection{Three Approaches to Analysis}

For the standard linear model, the qualitative behavior shown in Figure~\ref{fig:pd} is correct and can be made quantitatively precise in the large-system limit. In general, there are many different approaches that can be used to analyze high-dimensional inference problems.
Three popular approaches are described below.

\subsubsection{Information-Theoretic  Analysis} The standard approach taken in information theory is to first obtain precise characterizations of the fundamental limits, without any computational constraints, and then to use these limits to inform the design and analysis of practical methods.  In many cases, the fundamental limits can be understood by studying  macroscopic system properties in the large-system limit, such as the mutual information and the minimum mean-squared error (MMSE). There are a wide variety of mathematical tools to analyze these quantities in the context of  compression and communication~\cite{cover:2006, han:2004}. Unfortunately, these tools alone are often unable to provide simple descriptions for the behavior of high-dimensional statistical inference problems.

\subsubsection{The Replica Method from Statistical Physics}\index{replica method}  An alternative approach for analyzing the fundamental limits is provided by the powerful but heuristic \textit{replica method} from statistical physics~\cite{mezard:2009,zdeborova:2016b}. This method, which was developed originally in the context of disordered magnetic materials known as spin glasses, has been applied  successfully to a wide variety of problems in science and engineering. At a high level, the replica method consists of  a sequence of derivations that provide explicit formulas for the mutual information in the large-system limit. The main limitation, however,  is that the validity of these formulas relies on certain assumptions that are unproven in general. A common progression in the statistical physics literature is that results are first conjectured using the replica method and then proven using very different techniques. For example, formulas for the  Sherrington-Kirkpatrick were conjectured using the replica method in 1980 by Parisi~\cite{parisi:1980}, but were not rigorously proven until 2006 by Talagrand~\cite{talagrand:2006}. 
 
\subsubsection{Analysis of Approximate Inference} 
Significant work has focused on tractable methods for computing summary statistics of the posterior distribution. Variational inference~\cite{mackay:2003,wainwright:2008,pereyra:2016} refers to a large class of methods where the inference problem is recast as an optimization problem. The well-known mean-field variational approach typically refers to minimizing Kullback-Leibler divergence with respect to product distributions.  More generally, however, the variational formulation also encompasses the Bethe and Kikuchi methods for sparsely connected or highly decomposable models as well as the expectation consistent (EC) approximate inference framework of Opper and Winther~\cite{opper:2005}. 
 
A variety of methods can be used to solve or approximately solve the variational optimization problem, including message passing algorithms such as belief propagation~\cite{pearl:1998}, expectation propagation~\cite{minka:2001}, and approximate message passing~\cite{donoho:2009a}. In some cases, the behavior of these algorithms can be characterized precisely via density evolution (coding theory) or state evolution (compressed sensing), which leads to single-letter characterizations of the behavior in the large-system limit.  \index{approximate message passing}

% !TEX root = arxiv_main.tex
\section{Problem Setup and Characterization}
\label{sec:setup}

\subsection{Standard Linear Model} \index{standard linear model} 

Consider the inference problem implied by an unobserved random vector $\bX \in \mathbb{R}^N$ and an observed random vector $\bY \in \mathbb{R}^M$.
An important special case is the \emph{Gaussian channel},  where $M=N$ and the unknown variables are related to the observations via  
\begin{align}
Y_n = \sqrt{ s} X_n + W_n,  \qquad n = 1, \dots, N, \label{eq:awgn1}   \index{Gaussian channel} 
\end{align}
where $\{W_n\}$ is iid standard Gaussian noise and $s \in [0, \infty)$ parameterizes the signal-to-noise ratio. The Gaussian channel, which is also known as the Gaussian sequence model in the statistics literature~\cite{johnstone:2015}, provides a useful first-order approximation for a wide variety of applications in science and engineering.

The \emph{standard linear model} is an important generalization of the Gaussian channel in which the observations consist of noisy linear measurements: 
\begin{align}
Y_m = \langle \bA_m,  \bX \rangle  + W_m,  \qquad m = 1, \dots, M, \label{eq:linear_model_1} 
\end{align}
where $\langle \cdot, \cdot \rangle$ denotes the standard Euclidean inner product, $\{\bA_m\}$ is a known sequence of $N$-length measurement vectors, and $\{W_m\}$ is iid Gaussian noise. Unlike the Gaussian channel, the number of observations  $M$ may be different from the number of unknown variables $N$. For this reason the measurement indices are denoted by $m$ instead of $n$. In matrix form, the standard linear model can be expressed as 
\begin{align}
 \bY &= \bA \bX + \bW ,% \qquad \bA \in \reals^{M \times N},
 \label{eq:linear_model} 
\end{align}  
where $\bA\in \mathbb{R}^{M\times N}$ is a known matrix and $\bW\sim \normal(0, \Id_M)$. Inference problems involving the standard linear model include linear regression in statistics, both channel and symbol estimation in wireless communications, and sparse signal recovery in compressed sensing~\cite{foucart:2013,eldar:2012}. In the standard linear model, the matrix $\bA$ induces dependencies between the unknown variables which make the inference problem significantly more difficult. 

Typical inference questions for the standard linear model include the following: 
\begin{itemize}

\item \textbf{Estimation}  of unknown variables: The performance of an estimator $\bY \mapsto \hat{\bX} $ is often measured using its \emph {mean-squared error} (MSE),
\[
\ex{ \| \bX - \hat{\bX}\|^2}.
\]
The optimal MSE, computed by minimizing over all possible estimators, is called the \emph{minimum mean-squared error} (MMSE),
\[
\ex{ \| \bX - \ex{ \bX \mid \bY} \|^2}.
\]
The MMSE is also equivalent to the Bayes risk under squared-error loss. 

\item \textbf{Prediction} of a new observation $Y_\mathrm{new} = \langle \bA_\mathrm{new},  \bX \rangle + W_\mathrm{new}$: Performance of an estimator $(\bY,\bA_\mathrm{new}) \mapsto \hat{Y}_\mathrm{new}$ is often measured using the \emph{prediction mean-square error},
\[
%\mathsf{PMSE} = 
\ex{ (Y_\mathrm{new} - \hat{Y}_\mathrm{new})^2}.
\]

\item \textbf{Detection} of whether the $i$-th entry belongs to a subset $K$ of the real line: For example, $K=\mathbb{R}\setminus\{0\}$ tests whether entries are non-zero.  In practice, one typically defines a test statistic
\[ T(\by) = \ln \frac{p(\bY=\by | X_i \in K)}{p (\bY=\by | X_i \notin K)} \]
and then uses a threshold rule that chooses $X_i \in K$ if $T(\by) \geq \lambda$ and $X_i \notin K$ otherwise.
The performance of this detection rule can be measured using the true positive rate (TPR) and the false positive rate (FPR) given by
\[
\mathsf{TPR} = p(T(\bY)\geq \lambda | X_i \in K) \qquad \mathsf{FPR} = p(T(\bY)\geq \lambda | X_i \notin K).
\]
The receiver operating characteristic (ROC)  curve for this binary decision problem is obtained by plotting the TPR versus the FPR as a parametric function of the threshold $\lambda$. An example is given in Figure~\ref{fig:roc} below. 

\item \textbf{Posterior marginal approximation} of a subset $S$ of unknown variables: The goal is to compute an  approximation $\hat{p}(\bx_S \mid \bY )$ of the marginal distribution of entries in $S$, which can be used to provide summary statistics and measures of uncertainty. In some cases, accurate approximation of the posterior for small subsets is possible even though the full posterior distribution is intractable. 
\end{itemize}

\subsubsection{Analysis of Fundamental Limits} 
To understand the fundamental and practical limits of  inference with the standard linear model, a great deal of work has focused on the setting where: 1) the entries of $\bX$ are drawn iid from a known prior distribution; and 2) the matrix $\bA$ is an $M\times N$ random matrix whose entries $\bA_{ij}$ are drawn iid from $\mathcal{N}(0,1/N)$. Sparsity in $\bX$ can be modeled by using a spike-slab signal distribution (i.e., a mixture of a very narrow distribution with a very wide distribution). Consider a sequence of problems where the number of measurements per signal dimension converges to $\delta$.  In this case, the normalized mutual information and MMSE corresponding to the large-system limit\footnote{Under reasonable conditions, one can show that these limits are well-defined for almost all non-negative values of $\delta$ and that $\cI(\delta)$ is continuous.} are given by
\[
\cI(\delta) \triangleq \lim_{\substack{ M,N \to \infty \\ M/N\to\delta}} \frac{1}{N} I(\bX; \bY \gmid \bA) , \qquad  \cM(\delta)   \triangleq \lim_{\substack{M,N \to \infty \\ M/N\to\delta}} \frac{1}{N} \mmse(\bX \gmid \bY , \bA).
\]

Part of what makes this problem interesting is that the MMSE can have discontinuities, which are referred to as \textit{phase transitions}. The values of $\delta$ at which these discontinuities occur are of significant interest because they correspond to problem settings in which a small change in the number of measurements makes a large difference in the ability to estimate the unknown variables.  In the above limit, the value of $\cM(\delta)$ is undefined at these points.

\subsubsection{Replica-Symmetric Formula} 
Using the heuristic replica method from statistical physics, Guo and Verdu~\cite{guo:2005} derived single-letter formulas for the mutual information and MMSE in the standard linear model with iid variables and an iid Gaussian matrix:
\begin{align}
\cI(\delta)  &=  \min_{s\geq 0}  \underbrace{\left\{ I\left(X ; \sqrt{s} X + W \right) +  \frac{\delta}{2}\left(  \log\left( \frac{\delta}{s} \right) + \frac{s}{\delta} -1  \right) \right\}}_{\cF(s)} \label{eq:I_limit_IID}   \\
 \cM(\delta) &= \mmse\big(X \gmid \sqrt{ s^*(\delta) \, } X+ W\big). \label{eq:M_limit_IID}
\end{align}
In these expressions,  $X$ is a univariate random variable drawn according to the prior $p_X$, $W \sim \normal(0,1)$ is independent Gaussian noise, %$\delta = M/N $ is the  ratio between the number of measurements and the signal length, 
and  $s^*(\delta)$ is a minimizer of the objective function $\cF(s)$. Precise definitions of mutual information and MMSE are provided in Section~\ref{sec:role_of_MI} below. By construction, the replica mutual information~\eqref{eq:I_limit_IID} is a continuous function of the measurement rate $\delta$. However, the replica MMSE prediction~\eqref{eq:M_limit_IID} may have discontinuities when the global minimizer $s^*(\delta)$ jumps from one minimum to another and $\cM(\delta)$ is well-defined only if $s^*(\delta)$ is the unique minimizer. In~\cite{reeves:2016a,reeves:2016,reeves:2016d}, the authors prove these expressions are exact for the standard linear model with an iid Gaussian measurement matrix.
An overview of this proof is presented in Section~\ref{sec:proof}.

\subsubsection{Approximate Message Passing} An algorithmic breakthrough for the standard linear model with iid Gaussian matrices was provided by the approximate message passing (AMP) algorithm \cite{donoho:2009a,bayati:2011} and its generalizations \cite{rangan:2011,vila:2013,ma:2014,metzler:2016,schniter:2016a,rangan:2017a}.  For CDMA waveforms, the same idea was applied earlier in~\cite{kabashima:2003a}.

An important property of this class of algorithms is that the  performance for large iid Gaussian matrices is characterized precisely via a state evolution formalism~\cite{bayati:2011,bayati:2012a}. Remarkably, the fixed-points of the state evolution correspond to the stationary points of the objective function $\cF(s)$ in~\eqref{eq:I_limit_IID}. For cases where the replica formulas are exact, this means that AMP-type algorithms can be optimal with respect to marginal inference problems whenever the largest local minimizer of $\cF(s)$ is also the global minimizer~\cite{reeves:2017e}.

\subsubsection{The Generalized Linear Model}
In the generalized linear model, the observations $\bY \in \mathbb{R}^M$ are related to the unknown variables $\bX \in \mathbb{R}^N$ by way of the $M \times N$ matrix $\bA$, the random variable $\bZ = \bA \bX$, and the conditional distribution
\begin{align}
%\bZ = \bA \bX, \qquad 
p_{\bY \mid \bX} (\by \mid \bx) \triangleq p_{\bY \mid \bZ}(\by \mid \bA \bX) , \label{eq:GML}
\end{align}
where $p_{\bY \mid \bZ}(\by \gmid \bz)$ defines a memoryless (i.e., separable) channel.
%  that is memoryless  (or separable)usually separable, or at least highly decomposable
The generalized linear model is fundamental to generalized linear regression in statistics. It is also used to model different sensing architectures (e.g., Poisson channels, phase retrieval) and the effects of scalar quantization.
The AMP algorithm was introduced by Donoho, Maleki, and Montanari in~\cite{donoho:2009a} and extended to the GLM by Rangan in~\cite{rangan:2011}.
More recent work has focused on AMP-style algorithms for rotationally invariant random matrices~\cite{cakmak:2014,fletcher:2016,rangan:2016a,schniter:2016b,cakmak:2017a,he:2016}.

\subsection{Illustrative Examples} 

\begin{figure}
\centering
\scalebox{1.3}{%
\begin{tikzpicture}
    	\node[anchor=south west,inner sep=0, scale = 1.1] at (0,0) {\includegraphics[height= 1.8in]{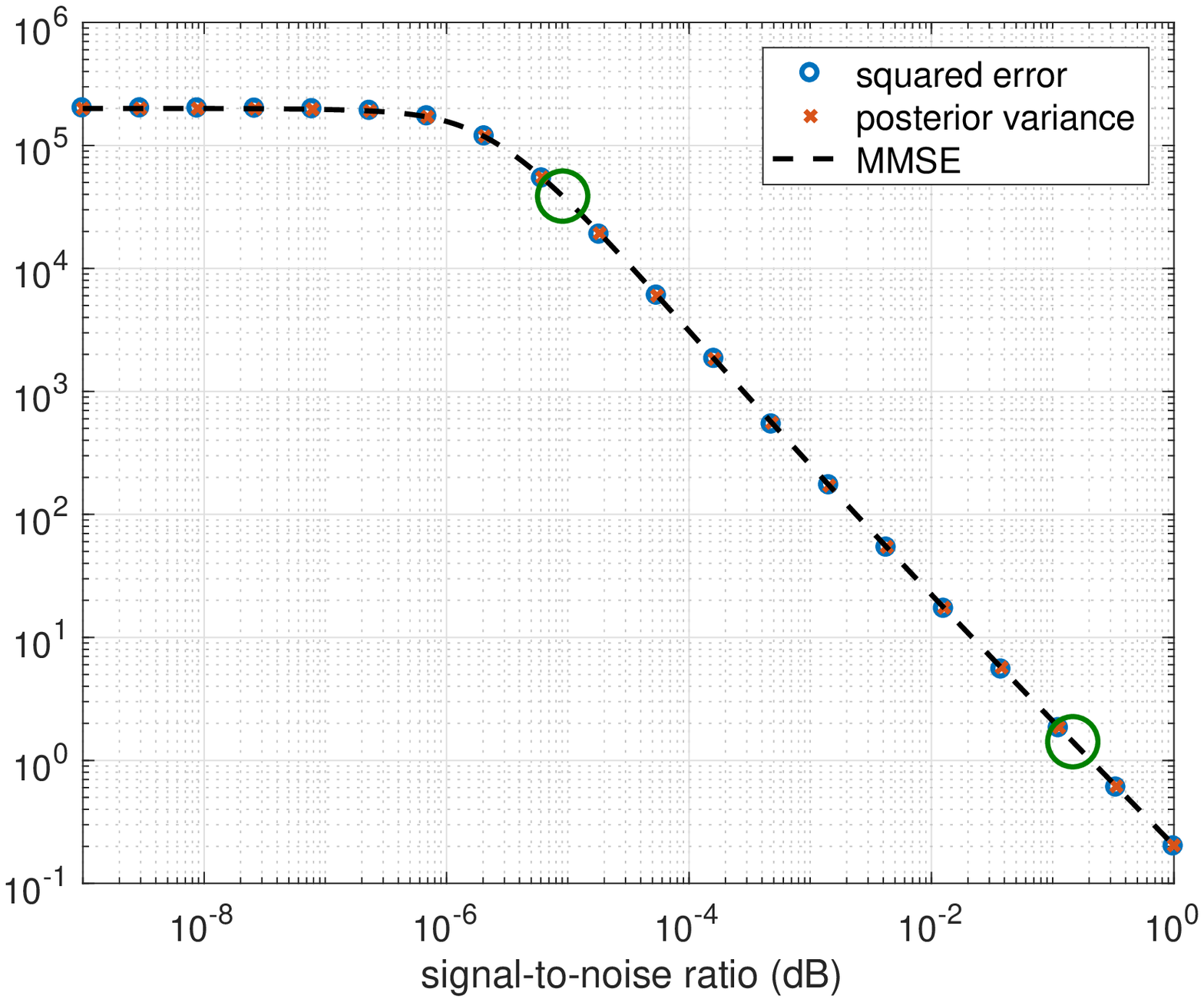}
\includegraphics[height= 1.8 in]{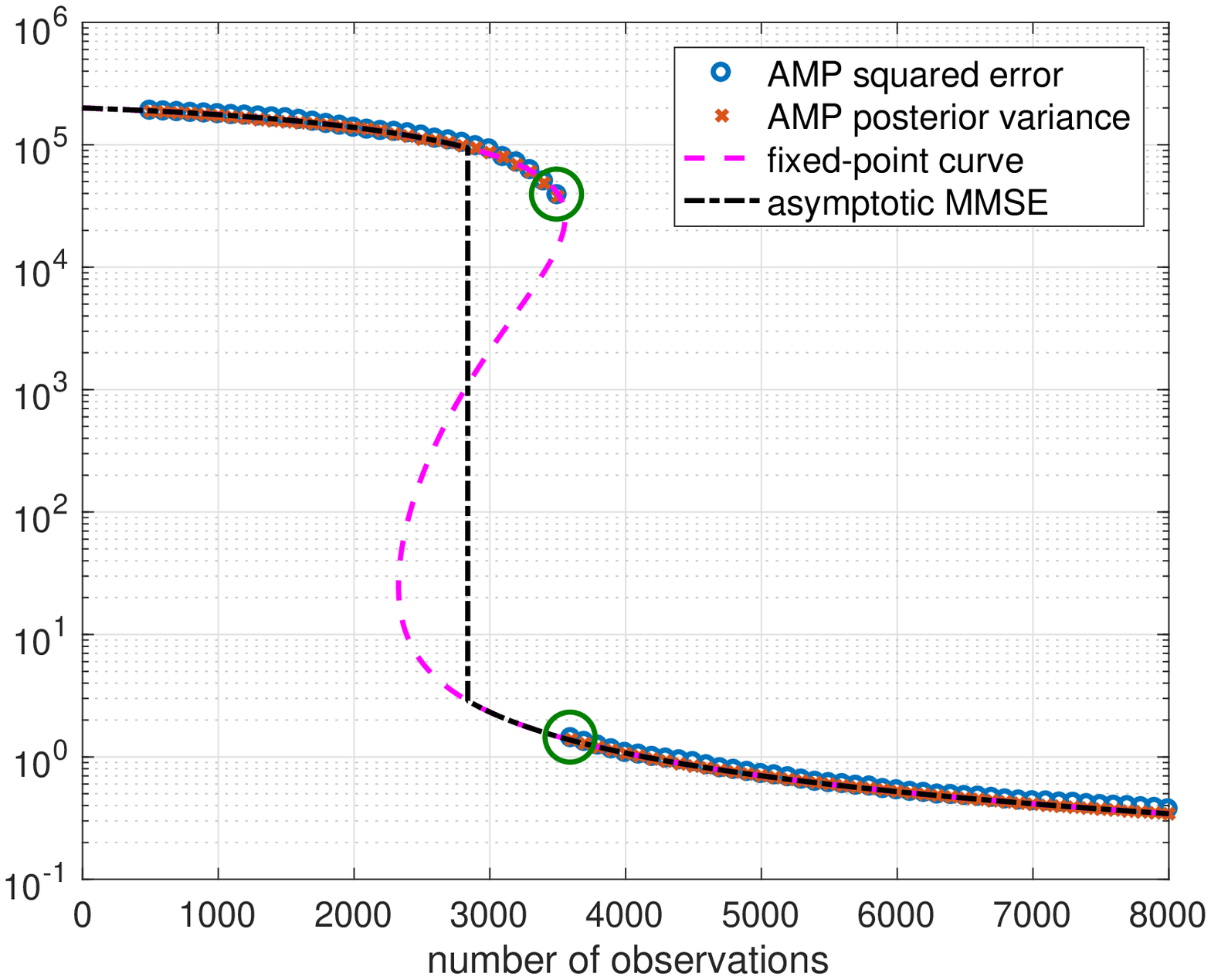}
};
	 \node[ scale = 0.8 ] at (3.1,4.3) {A};
	  \node[ scale = 0.8 ] at (5.6,1.5) {B};
	   \node[ scale = 0.8 ] at (9.3,4.3) {A};
	  \node[ scale = 0.8 ] at (9.3,1.5) {B};
	  
	  \node[ ] at (3.1,5.3) {Gaussian Channel};
	\node[ ] at (9.3,5.3) {Standard Linear Model};
	 \end{tikzpicture}}

\caption{\label{fig:mmse_amp_comp}%
Comparison of average squared error for the Gaussian channel as a function of signal-to-noise ratio  (left panel) and the standard linear model as a function of the number of observations (right panel). In both cases, the unknown variables are iid Bernoulli-Gaussian with zero mean and a fraction $\gamma = 0.20$ of nonzero entries.
}

\end{figure}

We now consider some examples that illustrate similarities and the differences between the Gaussian channel and the standard linear model. In these examples, the unknown variables are drawn iid from the Bernoulli-Gaussian distribution, which corresponds to the product of independent Bernoulli and Gaussian random variables and is given by
\begin{align}
\BG(x \mid \mu, \sigma^2, \gamma ) & \triangleq  (1-\gamma)  \delta_0(x )  + \gamma\,  \normal(x \mid \mu, \sigma^2). \label{eq:BG}
\end{align}
Here,  $\delta_0$ denotes a Dirac distribution with all probability mass at zero and $\mathcal{N}(x|\mu,\sigma^2)$ denotes the Gaussian pdf $(2\pi \sigma^2)^{-1/2} e^{-(x-\mu)^2/(2\sigma^2)}$ with mean $\mu$ and variance $\sigma^2$. The parameter $\gamma \in (0,1)$  determines the expected fraction of non-zero entries. The mean and variance of a random variable $X \sim \BG(x \mid \mu, \sigma^2, \gamma ) $ are given by
\begin{align}
\ex{ X} & = \gamma \mu, \qquad \var(X)  = \gamma(1 -\gamma) \mu^2  + \gamma \sigma^2 .\label{eq:BG_mv}
\end{align}

\begin{figure}
\centering
\scalebox{1.2}{%
\begin{tikzpicture}
    	\node[anchor=south west,inner sep=0, scale = 0.4] at (0,0) {\includegraphics%[trim={0 .28cm .2cm .25cm},clip]
	{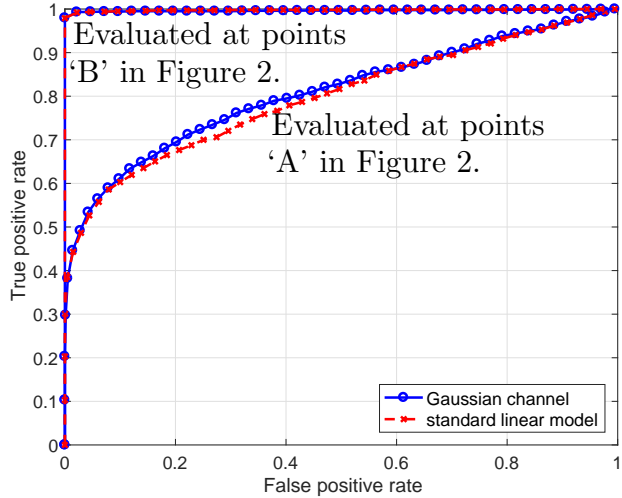}};
\node[align =left, scale = 1, text width= 5 cm ] at (3.2,4.9) {Evaluated at points \\ `B' in Figure~\ref{fig:mmse_amp_comp}. };
	  \node[align =left, scale = 1, text width= 5 cm ] at (5.4,3.9) {Evaluated at points\\ `A' in Figure~\ref{fig:mmse_amp_comp}. };
%	 \node[align =left, scale = 0.8, text width= 5 cm ] at (3,6.1) {Evaluated at points \\ `B' in Figure~\ref{fig:mmse_amp_comp}. };
%	  \node[align =left, scale = 0.8, text width= 5 cm ] at (5,4.5) {Evaluated at points\\ `A' in Figure~\ref{fig:mmse_amp_comp}. };
\end{tikzpicture}}

\caption{\label{fig:roc}ROC curves for detecting the nonzero variables in the parameter regimes labeled `A' and `B' in Figure~\ref{fig:mmse_amp_comp}. The curves for the Gaussian channel (blue) are obtained by thresholding the true posterior inclusion probabilities $\{\gamma_n\}$. The curves for the standard linear model (red) are obtain by thresholding the AMP approximations of the posterior inclusion probabilities.}
\end{figure}

\subsubsection{Gaussian Channel with Bernoulli-Gaussian Prior} 

If the unknown variables are iid $\BG(x \mid \mu, \sigma^2, \gamma ) $ and  the observations are generated according to the Gaussian channel~\eqref{eq:awgn1} then the posterior distribution \emph{decouples} into the product of its marginals: 
\begin{align}
p\left(\bx \mid \by \right) & = \prod_{n=1}^N p\left(x_n \mid y_n \right). \label{eq:pxy_decomple} 
\end{align}
Furthermore, the posterior marginal $p(x_n \mid y_n)$ is also a Bernoulli-Gaussian distribution but with new parameters $(\mu_n, \sigma^2_n, \gamma_n)$ that depend on $y_n$:
\begin{align}
\mu_n &=
%E_\mu (y_n) \triangleq 
\mu + \frac{\sqrt{s} \sigma^2}{1 + s \sigma^2} (y_n -  \sqrt{s} \mu) \\
\sigma^2_n & =   \frac{\sigma^2}{1 + s \sigma^2} \\
\gamma_n &=
%E_\gamma (y_n) \triangleq
%\left[ 1 +  \frac{1/\gamma-1}{ \sqrt{1 + s \sigma^2}} \exp\left(  \frac{s\mu^2  - 2 \sqrt{s} \mu y_n   - s \sigma^2 y_n^2    }{ 2(1+s\sigma^2)}  \right)  \right]^{-1} .
\left[ 1 +  \frac{ (1- \gamma) \sqrt{1 + s \sigma^2}}{\gamma}  \exp\left(  \frac{s\mu^2  - 2 \sqrt{s} \mu y_n   - s \sigma^2 y_n^2    }{ 2(1+s\sigma^2)}  \right) \right]^{-1}.
\end{align}
%\textcolor{red}{Fixed notation for functional dependence} 
Given these parameters, the posterior mean $\ex{X_n \mid Y_n}$  and posterior variance  $\var(X_n \mid Y_n)$ can be computed using~\eqref{eq:BG_mv}. The parameter $\gamma_n$ is the conditional probability that $X_n$ is nonzero given $Y_n$. This parameter is often called the \emph{posterior inclusion probability} in the statistics literature.

The decoupling of the posterior distribution makes it easy to characterize the fundamental limits of performance measures. For example the MMSE is the expectation of the posterior variance $\var(X_n \mid Y_n)$ and the optimal tradeoff between the true positive rate and false positive rate for detecting the event $\{ X_n \ne 0\}$ is characterized by the distribution of $\gamma_n$. 

To investigate the statistical properties of the posterior distribution we perform a numerical experiment. First, we draw $N = 10,000$ variables according to the Bernoulli-Gaussian variables with $\mu = 0$, $\sigma^2 = 10^6$, and prior inclusion probability $\gamma = 0.2$. Then, for various values of the signal-to-noise ratio parameter $s$, we evaluate the posterior distribution corresponding to the output of the Gaussian channel. 

In Figure~\ref{fig:mmse_amp_comp} (left panel), we plot three quantities associated with the estimation error: 
\begin{align*}
\text{average squared error:} \quad &\frac{1}{N}  \sum_{n=1}^N\left (X_n - \ex{ X \mid Y_n} \right)^2\\
\text{average posterior variance:} \quad & \frac{1}{N} \sum_{n=1}^N \var(X_n \mid Y_n)\\
\text{average MMSE:} \quad & \frac{1}{N} \sum_{n=1}^N \ex{ \var(X_n \mid Y_n)}.
\end{align*}
Note that the squared error and posterior variance are both random quantities because they are functions of the data. This means that the corresponding plots would look slightly different if the experiment were repeated multiple times. The MMSE,  however, is a function of the joint distribution of $(\bX, \bY)$ and is thus non-random. In this setting, the fact that there is little difference between the \emph{averages} of these quantities can be seen as a consequence of the decoupling of the posterior distribution and the law of large numbers.

In Figure~\ref{fig:roc}, we plot the ROC curve for the problem of detecting the nonzero variables. The curves are obtained by thresholding the posterior inclusion probabilities $\{ \gamma_n\}$ associated with the values of the signal-to-noise ratio at the points A and B in the figure.

\subsubsection{Standard Linear Model with Bernoulli-Gaussian Prior} 
Next, we consider the setting where %  the unknown variables are drawn according to the same iid Bernoulli-Gaussian distribution as in the previous example but 
the observations are generated  by the standard linear model~\eqref{eq:linear_model}. In this case, the measurement matrix introduces dependence in the posterior distribution and the decoupling seen in \eqref{eq:pxy_decomple} does not hold in general. 

To characterize the posterior distribution, one can use that fact that $\bX$ is conditionally Gaussian given the support vector $\bU \in \{0,1\}^N$, where $X_n=0$ if $U_n = 0$ and $X_n \neq 0$ with probability one if $U_n = 1$.  Consequently, the posterior distribution can be expressed as a Gaussian mixture model of the form
\begin{align*}
p(\bx \mid \by, \bA)  =  \sum_{\bu \in \{0,1\}^N} p(\bu \mid  \by, \bA)\, \normal\left(\bx \mid \ex{ \bX \mid \by, \bA, \bu} , \cov(\bX \mid \by, \bA, \bu) \right) ,
\end{align*}
where the summation is over all possible support sets. The posterior probability of the support set is given by 
\begin{align*}
p(\bu \mid \by, \bA) \propto p(\bu) \normal\left(\by \mid\mu  \bA_\bu \one   ,  \Id + \sigma^2 \bA_{\bu}^T \bA_{\bu} \right), 
\end{align*}
where $\bA_{\bu}$ is the submatrix of $\bA$ formed by removing columns where $u_n = 0$ and $\one$ denotes a vector of ones. The posterior marginal is obtained by integrating out the other variables to get 
\begin{align*}
p(x_n \mid \by, \bA)  =  \int p(\bx \mid \by, \bA)  \, \dd \bx_{\sim n},
\end{align*}
where $\bx_{\sim n}$ denotes all the entries except for $x_n$. 

Here, the challenge is that the number of terms in the summation over $\bu$ grows exponentially with the signal dimension $N$. Since it is difficult to compute the posterior distribution in general, we use AMP \index{approximate message passing|(} to compute approximations to the posterior marginal distributions. The marginals of the approximation, which belong to the Bernoulli-Gaussian family of distributions, are given by
\begin{align}
\hat{p}(x_n \mid \by, \bA) = \BG(x_n \mid  \mu_n, \sigma^2_n, \gamma_n), \label{eq:phat}
\end{align}
where the parameters $(\mu_n, \sigma^2_n, \gamma_n)$ are the outputs of the AMP algorithm.

Similar to the previous example, we perform a numerical experiment to investigate the statistical properties of the marginal approximations. First, we draw $N = 10,000$ variables according to the Bernoulli-Gaussian variables with $\mu=0$, $\sigma^2 = 10^6$,  and prior inclusion probability $\gamma = 0.2$. Then, for various values of $M$, we obtain measurements from the standard linear model with iid Gaussian measurement vectors $\bA_m \sim \normal(0,N^{-1} \Id)$ and use AMP to compute the parameters $(\mu_n, \sigma^2_n, \gamma_n)$ used in the marginal posterior approximations.

In Figure~\ref{fig:mmse_amp_comp} (right panel), we plot the squared error and the approximation of the posterior variance associated with the AMP marginal approximations: 
 \begin{align*}
 \text{average AMP  squared error:} \quad &\frac{1}{N}  \sum_{n=1}^N\left (X_n - \bEx_{\hat{p}}\left[ X_n \mid \bY, \bA\right] \right)^2\\
\text{average AMP posterior variance:} \quad & \frac{1}{N} \sum_{n=1}^N \var_{\hat{p}}(X_n \mid \bY, \bA).
\end{align*}
In these expressions, the expectation and the variance are computed with respect to the marginal approximation in~\eqref{eq:phat}. Because these quantities are  functions of the random data, one expects that they would look slightly different if the experiment were repeated multiple times.

At this point, there are already some interesting observations that can be made. First, we note that the AMP approximation of the mean can be viewed as a point-estimate of the unknown variables. Similarly,  the AMP approximation of the posterior variance (which depends on the observations but not the ground truth) can be viewed as a point-estimate of the squared error. From this perspective, the close correspondence between the squared error  and the AMP approximation of the variance seen in Figure~\ref{fig:mmse_amp_comp} suggests that AMP is self-consistent in the sense that it provides an accurate estimate of its square error.

Another observation is that the squared error undergoes an abrupt change at around 3,500 observations, between the points labeled `A' and 'B'. Before this point, the squared error is within an order of magnitude of the prior variance. After this point, the squared error drops discontinuously. This illustrates that the estimator provided by AMP is quite accurate in this setting.

However, there are still some important questions that remain. For example, how accurate are the AMP posterior marginal approximations? Is it possible that a different algorithm (e.g., one that computes the true posterior marginals) would lead to estimates with significantly smaller squared error? Further questions concern how much information is lost in focusing only on the marginals of the posterior distribution as opposed to the full posterior distribution.

Unlike the Gaussian channel, it is not possible to evaluate the MMSE directly because the summation over all $2^{10000}$ support vectors is prohibitively large. For comparison, we plot the large-system MMSE predicted by \eqref{eq:M_limit_IID}, which corresponds to the large-$N$ limit where the fraction of observations is parametrized by $\delta = M/N$. t 

The behavior of the large-system MMSE is qualitatively similar to the AMP squared error because it has a single jump discontinuity (or phase transition). However, the jump occurs after only 2,850 observations for the MMSE as opposed to after 3,600 observations for AMP. By comparing the AMP squared error with the asymptotic MMSE, we see that the AMP marginal approximations are accurate in some cases (e.g., when the number of observations is less than 2,840 or greater than 3,600) but highly inaccurate in others (e.g., when the number of observations is between 2,840 and 3,600).

In Figure~\ref{fig:roc}, we plot the ROC curve for the problem of detecting the nonzero variables in the standard linear model. In this case, the curves are obtained by thresholding AMP approximations of the posterior inclusion probabilities $\{ \gamma_n\}$.  It is interesting to note that the ROC curves corresponding to the two different observation models (the Gaussian channel and the standard linear model) have similar shapes when they are evaluated in problem settings with matched squared error. 
\index{approximate message passing|)} 

% !TEX root = arxiv_main.tex

\section{The Role of Mutual Information and MMSE} 
\label{sec:role_of_MI}

The amount one learns about an unknown vector $\bX$ from an observation $\bY$ can be quantified in terms of the difference between the prior and posterior distributions. For a particular realization of the observations $\by$, a fundamental measure of this difference is provided by the relative entropy
\begin{align*}
\DKL{ p_{\bX \mid \bY}(\cdot \mid \by) }{p_{\bX}(\cdot)}.
\end{align*}  
This quantity is nonnegative and equal to zero if and only if the posterior is the same as the prior almost everywhere.

For real vectors, another way to assess the difference between the prior and posterior distribution is to compare the first and second moments of their distributions. These moments are summarized by the mean $\ex{\bX}$, the conditional mean $\ex {\bX \mid \bY =\by}$, the covariance matrix 
\begin{align*}
\cov(\bX) & \triangleq \ex{ \left( \bX - \ex{ \bX}\right) \left( \bX - \ex{ \bX}\right)^T},
\end{align*}
and the conditional covariance matrix
\begin{align*}
\cov(\bX\mid \bY = \by ) & \triangleq \ex{ \left( \bX - \ex{ \bX \mid \bY}\right) \left( \bX - \ex{ \bX \mid \bY }\right)^T \mid \bY = \by}.
\end{align*}
Together, these provide some measure of how much ``information'' is in the data.

One of the difficulties of working with the posterior distribution directly is that it can depend non-trivially on the particular realization of the data. It can be much easier to focus on the behavior for \emph{typical} realizations of the data by studying the distribution of the relative entropy when $\bY$ is drawn according to the marginal distribution  $p(\by)$. For example, the expectation of the relative entropy is the mutual information
\begin{align*}
 I(\bX ; \bY)  & \triangleq \bEx \left[ \DKL{ p_{\bX \mid \bY}(\cdot \mid \bY) }{p_{\bX}(\cdot)} \right] .
\end{align*}
Similarly, the expected value of the conditional covariance matrix\footnote{Observe that $\cov(\bX\mid \bY )$ is a random variable in the same sense as $\ex{\bX \mid \bY}$.} is%, which we refer to as the expected  matrix:
\begin{align*}
\ex{ \cov(\bX\mid \bY )} & = \ex{ \left( \bX - \ex{ \bX \mid \bY}\right) \left( \bX - \ex{ \bX \mid \bY }\right)^T }.
%\mmse(\bX \mid \bY)  = \ex{ \gtr\left( \cov(\bX  \mid \bY) \right)} = \ex{ \| \bX - \ex{\bX \mid \bY}\|^2}
\end{align*}
The trace of this matrix equals the Bayes risk for squared-error loss, which is more commonly called the MMSE and defined by
\begin{align*}
\mmse(\bX \mid \bY)  \triangleq  \gtr( \ex{ \cov(\bX\mid \bY )}) = \ex{ \left\| \bX - \ex{ \bX \mid \bY} \right \|_2^2},
\end{align*}
where $\|\cdot\|_2$ denotes the Euclidean norm.
Part of the appeal of working with the mutual information and the MMSE is that they satisfy a number of useful functional properties, including chain rules and data-processing inequalities.

The prudence of focusing on the expectation with respect to the data depends on the extent to which the random quantities of interest deviate from their expectations. In the statistical physics literature, the concentration of the relative entropy and squared error around the their expectations is called the \emph{self-averaging property} and is often assumed for large systems~\cite{mezard:2009}.

\subsection{I-MMSE relationships for the Gaussian channel}\label{sec:I_MMSE} 

Given an $N$-dimensional random vector $\bX = (X_1, \dots , X_N)$, the output of the Gaussian channel with signal-to-noise ratio parameter $s \in [0, \infty)$ is denoted by
\begin{align*}
\bY(s) =  \sqrt{s} \, \bX + \bW,
\end{align*}
where $\bW \sim \normal(0, \Id_N)$ is independent Gaussian noise. Two important functionals of the joint distribution of $(\bX, \bY_s)$  are the \emph{mutual information function}: 
\begin{align*}
I_{\bX}(s) &  =\frac{1}{N}  I(\bX; \bY(s))
\end{align*}
and the \emph{MMSE function}:
\begin{align*}
M_{\bX}(s) &  = \frac{1}{N}   \ex{ \left\| \bX - \ex{\bX \mid  \bY(s)} \right\|_2^2}. 
\end{align*}
 In some cases, these functions can be 
computed efficiently using numerical integration or Monte Carlo approximation. For example, if the entries of  $\bX$ are iid copies of a scalar random variable $X$ then the mutual information and MMSE depend only on the marginal distribution:
\begin{align*}
 I_{\bX}(s) &  =I(X; \sqrt{s} X + W)\\
 M_{\bX}(s) &  =  \ex{\left(  X - \ex{X \mid  \sqrt{s} X + W} \right)^2} .
\end{align*}
Another example is if $\bX$ is drawn according to a Gaussian mixture model with a small number of mixture components. For general high-dimensional distributions, however, direct computation of these functions can be intractable due to the curse of dimensionality. Instead, one often resorts to asymptotic approximations.

The I-MMSE relationship \cite{guo:2005a} asserts that the derivative of the mutual information is one half of the MMSE,
\begin{align}
\frac{\dd}{ \dd s} I_{\bX}(s) = \frac{1}{2} M_{\bX}(s) . \index{I-MMSE relationship}
\end{align}
This result is equivalent to the classical  De Bruijn identity~\cite{stam:1959}, which relates the derivative of differential entropy to the Fisher information. Part of the significance of the I-MMSE relationship is that it provides a link between an information-theoretic quantity and an estimation-theoretic quantity.

Another important property of the MMSE function is that its derivative is
\begin{align}
M'_{\bX}(s) &  = - \frac{1}{N}  \ex{\| \cov(\bX \mid \sqrt{s} \, \bX + \bW)\|_F^2}, \label{eq:Mprime}
\end{align}
where $\|\cdot\|_F$ denotes the Frobenious norm.  Since the derivative is non-positive, it follows that the MMSE function is non-increasing and the mutual information function is concave.

The relationship between the MMSE function and its derivative imposes some useful constraints on the MMSE. One example is the so-called \emph{single-crossing property} \cite{guo:2011}, which asserts that for any random vector $\bX$ and isotropic Gaussian random vector $\bZ$, the MMSE functions $M_{\bX}(s)$ and $M_{\bZ}(s)$ cross at most once. 

The following result states a monotonicity property concerning a transformation of the MMSE function. A matrix generalization of this result is given in \cite{reeves:2018a}. 

\begin{theorem}[Monotonicity of MMSE]\label{thm:monotonicity_mmse}
For any random vector $\bX$ that is not almost-surely constant, the function
\begin{align}
k_{\bX}(s) \triangleq \frac{1}{M_{\bX}(s)} -s
\end{align}
is well-defined and non-decreasing on $(0, \infty)$.
\end{theorem}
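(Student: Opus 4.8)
The plan is to show that $k_{\bX}$ is differentiable on $(0,\infty)$ with a non-negative derivative. First I would dispatch well-definedness: since $\bX$ is not almost surely constant, the posterior density, which is proportional to $p_{\bX}(\bx)\exp\!\big(-\tfrac12\|\bY(s)-\sqrt{s}\,\bx\|^2\big)$, is a reweighting of the prior by a strictly positive Gaussian factor and therefore is never a point mass. Hence $\cov(\bX\mid\bY(s))$ is almost surely nonzero and $M_{\bX}(s)>0$ for every $s\in(0,\infty)$, so $1/M_{\bX}(s)$ makes sense; differentiability of $M_{\bX}$ is already granted by the derivative formula~\eqref{eq:Mprime}. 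Differentiating gives $k_{\bX}'(s)=-M_{\bX}'(s)/M_{\bX}(s)^2-1$, so, using $M_{\bX}(s)>0$, the assertion $k_{\bX}'(s)\ge 0$ is equivalent to the pointwise inequality $-M_{\bX}'(s)\ge M_{\bX}(s)^2$. This inequality is the crux of the argument.

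To establish it I would write $\bC=\cov(\bX\mid\bY(s))$, a random $N\times N$ positive-semidefinite matrix with (random) eigenvalues $\lambda_1,\dots,\lambda_N\ge 0$. The definition of the MMSE function gives $M_{\bX}(s)=\tfrac1N\ex{\gtr \bC}$, while~\eqref{eq:Mprime} gives $-M_{\bX}'(s)=\tfrac1N\ex{\|\bC\|_F^2}=\tfrac1N\ex{\gtr(\bC^2)}$. The key is then to apply Cauchy--Schwarz twice. Over the eigenvalues, $\big(\sum_i\lambda_i\big)^2\le N\sum_i\lambda_i^2$, i.e.\ $(\gtr \bC)^2\le N\,\gtr(\bC^2)$, which yields $\ex{\gtr(\bC^2)}\ge\tfrac1N\ex{(\gtr \bC)^2}$; and over the probability space (Jensen's inequality), $\ex{(\gtr \bC)^2}\ge\big(\ex{\gtr \bC}\big)^2$. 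Chaining these,
\begin{align*}
-M_{\bX}'(s)=\frac1N\ex{\gtr(\bC^2)}\;\ge\;\frac{1}{N^2}\ex{(\gtr \bC)^2}\;\ge\;\frac{1}{N^2}\left(\ex{\gtr \bC}\right)^2=M_{\bX}(s)^2,
\end{align*}
which is exactly the desired bound, so $k_{\bX}'(s)\ge 0$ and $k_{\bX}$ is non-decreasing.

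I expect the only genuine content to be recognizing the two-stage Cauchy--Schwarz: the ``spatial'' step over the eigenvalues, which produces the factor $1/N$ that matches the normalization of $M_{\bX}$, and the ``probabilistic'' step that converts $\ex{(\gtr \bC)^2}$ into $\big(\ex{\gtr\bC}\big)^2$. Both steps are tight exactly when $\bC$ is (almost surely) a deterministic multiple of the identity, which foreshadows that the isotropic Gaussian is the extremal case where $k_{\bX}$ is affine. The remaining elements---the reduction via the quotient derivative and the strict positivity of $M_{\bX}$ on $(0,\infty)$---are routine, so the main obstacle is essentially conceptual rather than computational.
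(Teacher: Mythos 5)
Your argument is correct and matches the paper's proof in essence: both reduce the claim to the pointwise inequality $-M_{\bX}'(s)\ge M_{\bX}^2(s)$ and establish it by two applications of Jensen/Cauchy--Schwarz, one over the eigenvalues of the conditional covariance and one over the randomness in the observation. The only cosmetic difference is the order of the two steps --- the paper first bounds $\ex{\|\cov(\bX\mid\bY)\|_F^2}$ below by $\|\ex{\cov(\bX\mid\bY)}\|_F^2$ and then applies the eigenvalue inequality to the deterministic matrix $\ex{\cov(\bX\mid\bY)}$, whereas you apply the eigenvalue inequality pointwise to the random matrix and then Jensen to the scalar $\gtr\cov(\bX\mid\bY)$ --- which changes nothing of substance.
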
 
\begin{proof}
The MMSE function is real analytic, and hence infinitely differentiable,  on $(0,\infty)$ \cite{guo:2011}. By differentiation,  one finds that 
\begin{align}
\frac{\dd}{ \dd s} k_{\bX}(s) & = -  M'_{\bX}(s) /  M^2_{\bX}(s)- 1.  \label{eq:Kprime}
\end{align}
Let $\lambda_1 , \dots , \lambda_N \in [0,\infty)$ be the eigenvalues of the $N \times N$ matrix $\ex{ \cov(\bX \mid \bY))}$ where $\bY = \sqrt{s} \, \bX+ \bW$. Starting with \eqref{eq:Mprime}, we find that
\begin{align*}
 - M'_{\bX}(s)  & =    \frac{1}{N}  \ex{ \left \| \cov(\bX \mid \bY)  \right \|_F^2}  \ge  \frac{1}{N}  \left \|  \ex{\cov(\bX \mid \bY)}  \right \|_F^2\\
 &   =  \frac{1}{N} \sum_{n=1}^N \lambda^2_n  \ge  \left(  \frac{1}{N} \sum_{n=1}^N \lambda_n  \right)^2   =  M_{\bX}^2(s),
\end{align*}
where both inequalities are due to Jensen's inequality. Combining this inequality with \eqref{eq:Kprime} establishes that the derivative of $k_{\bX}(s)$ is non-negative and hence $k_{\bX}(s)$ is non-decreasing.
\end{proof}

We remark Theorem~\ref{thm:monotonicity_mmse} implies the single-crossing property. To see this, note that if $\bZ \sim \normal(0, \sigma^2 I)$ then $k_{\bZ}(s) = \sigma^{-2}$ is a constant and thus $k_{\bX}(s)$ and $k_{\bZ}(s)$ cross at most once.  Furthermore, Theorem~\ref{thm:monotonicity_mmse} shows that for many problems, the Gaussian distribution plays an extremal role for distributions with finite second moments. For example, if we let $\bZ$ be a Gaussian random vector with the same mean and covariance as $\bX$, then we have
\begin{align*}
I_{\bX}(s) &\le  I_{\bZ}(s)\\ %= \frac{1}{2} \log \det \left( I + s \cov(\bX) \right) \\
M_{\bX}(s) &\le M_{\bZ}(s)\\ %\\% = \gtr\left(  \left( \cov^{-1}(\bX)  + s I \right)^{-1}  \right) \\
%M'_{\bX}(s) + M^2_{\bX}(s) & \le M'_{\bZ}(s) + M^2_{\bZ}(s) \\%= 0 
M_{\bX}' & \le M_{\bZ}'(s)
\end{align*} 
where equality holds if and only if $\bX$ is Gaussian.  The importance of these inequalities follows from the fact that the Gaussian distribution is easy to analyze and often well behaved.

\subsection{Analysis of Good Codes for the Gaussian Channel}
\label{sec:good_codes}

This section provides an example of the how the properties described in Section~\ref{sec:I_MMSE} can be applied in the context of a high-dimensional inference problem. The focus is on the channel coding problem for the Gaussian channel. A code for the Gaussian channel is a collection $\cX = \{ \bx(1), \dots \bx(L) \}$ of $L$ codewords in $\mathbb{R}^N$ such that $\bX = \bx(J)$ where $J$ is drawn uniformly from $\{1,2,\ldots,L\}$. 
The output of the channel is given by 
\begin{align}
\bY = \sqrt{ \snr} \, \bX + \bW,  \label{eq:awgn} 
\end{align}
where $\bW \sim \normal(0, \Id)$ is independent Gaussian noise.  

The code is called $\eta$-good if it satisfies three conditions: 
\begin{itemize}
\item \emph{Power Constraint}: The codewords satisfy the average power constraint 
\begin{align}
\frac{1}{N} \ex{ \|\bX\|_2^2}  \le 1. \label{eq:power} 
\end{align}

\item \emph{Low Error Probability:} For the MAP decoding decision $\hat{\bX}$, we have
\[ \Pr \left( \hat{\bX} = \bX \right) \geq \pr{ p_{\bX|\bY} (\bx(J) | \bY ) > \max_{\ell\neq J} p_{\bX|\bY} (\bx(\ell) | \bY ) } \geq 1-\eta. \]
\item \emph{Sufficient Rate:} The number of codewords satisfies $L \geq  (1 + \snr)^{(1- \eta)N/2}.$
\end{itemize}

\begin{lemma}[Corollary of the Channel Coding Theorem]
For every $\snr > 0$ and $\eps >0$ there exists an integer $N$ and random vector $\bX = (X_1, \dots X_N)$ satisfying the average power constraint \eqref{eq:power}  as well as the following inequalities: 
\begin{align}
I_{\bX}(\snr) &\ge \frac{1}{2} \log(1 + \snr) - \eps  \label{eq:I_awgn} \\
M_{\bX}(\snr) & \le \eps  \label{eq:M_awgn}. 
\end{align}
\end{lemma}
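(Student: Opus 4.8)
The plan is to let $\bX$ be drawn uniformly from the codewords of a single good code and to verify that one such code satisfies both \eqref{eq:I_awgn} and \eqref{eq:M_awgn} at once. For the given $\snr$ and $\eps$, I would first fix a small $\eta>0$ and a large $N$ (to be pinned down at the end), and invoke the achievability part of the channel coding theorem to produce an $\eta$-good code $\cX=\{\bx(1),\dots,\bx(L)\}$ with $L\ge(1+\snr)^{(1-\eta)N/2}$. I would use the standard strengthening in which the codewords may be taken to lie on the power sphere, $\|\bx(\ell)\|_2^2=N$ for every $\ell$; this automatically gives the average power constraint $\frac1N\ex{\|\bX\|_2^2}=1$ and, crucially, a uniform per-codeword norm bound. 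Setting $\bX=\bx(J)$ with $J$ uniform on $\{1,\dots,L\}$ and $\bY=\bY(\snr)$, it then remains to lower bound the mutual information and upper bound the MMSE for this $\bX$.

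For \eqref{eq:I_awgn} I would use Fano's inequality. Since $J\mapsto\bx(J)$ is a bijection onto distinct codewords, $H(\bX)=\log L$ and $H(\bX\mid\bY)=H(J\mid\bY)$, so $I(\bX;\bY)=\log L-H(J\mid\bY)$. The low-error-probability property gives $\pr{\hat J\ne J}\le\eta$ for the MAP decoder, and Fano yields $H(J\mid\bY)\le\log 2+\eta\log L$. Combining with the rate bound $\log L\ge\frac{(1-\eta)N}{2}\log(1+\snr)$ gives $I_{\bX}(\snr)\ge\frac{(1-\eta)^2}{2}\log(1+\snr)-\frac{\log 2}{N}$, which exceeds $\frac12\log(1+\snr)-\eps$ once $\eta$ is small and $N$ is large.

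For \eqref{eq:M_awgn} I would compare the posterior mean against the hard-decision estimator $\hat{\bX}=\bx(\hat J)$, where $\hat J$ is the MAP decoder. Because the posterior mean is the MMSE-optimal estimator, $M_{\bX}(\snr)\le\frac1N\ex{\|\bX-\bx(\hat J)\|_2^2}$. This error vanishes whenever decoding succeeds, so only the event $\{\hat J\ne J\}$ contributes; on that event $\|\bx(J)-\bx(\hat J)\|_2^2\le 2\|\bx(J)\|_2^2+2\|\bx(\hat J)\|_2^2\le 4N$ by the peak-power bound, and the event has probability at most $\eta$. Hence $M_{\bX}(\snr)\le 4\eta$, which is at most $\eps$ for $\eta\le\eps/4$.

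The one place requiring genuine care is this MMSE step, and specifically the per-codeword norm control. The small MMSE reflects the discrete structure of a good code (once we are above threshold, decoding is reliable and the posterior collapses onto a single codeword), and it does \emph{not} follow from the mutual-information bound alone: a Gaussian input attains the same capacity value yet has $M_{\bX}(\snr)=1/(1+\snr)$, which is not small. Thus the decoding argument, rather than the I-MMSE machinery, is what drives \eqref{eq:M_awgn}, and the only real subtlety is that the average power constraint by itself does not bound the contribution of mis-decoded codewords with atypically large norm; invoking the spherical (maximal-power) form of the coding theorem removes this difficulty. Finally I would choose $\eta=\min\{\eps/4,\,\eta_0\}$ with $\eta_0$ small enough that the mutual-information bound clears $\frac12\log(1+\snr)-\eps$, and then take $N$ large enough to absorb the $-\frac{\log 2}{N}$ term, so that the resulting $\bX$ meets all three requirements simultaneously.
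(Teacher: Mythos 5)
Your proposal is correct. The paper states this lemma without proof, as a direct corollary of the achievability part of Shannon's channel coding theorem for the AWGN channel, and your argument is the standard way to fill in the details: Fano's inequality for \eqref{eq:I_awgn} and a comparison with the MAP hard-decision estimator for \eqref{eq:M_awgn}. You have also correctly identified the one genuine subtlety — the average power constraint \eqref{eq:power} alone does not control the squared error contributed by mis-decoded codewords of atypically large norm — and your invocation of the spherical (equal-energy) form of the coding theorem is the right fix; an expurgation argument on a random Gaussian codebook would serve equally well.
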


The distribution on $\bX$ induced by a good code is fundamentally different from the iid Gaussian distribution that maximizes the mutual information. For example, a good code  defines a discrete distribution that has finite entropy whereas the Gaussian distribution is continuous and hence has infinite entropy. Furthermore,  while the MMSE of a good code can be made arbitrarily small (in the large $N$ limit), the MMSE of the Gaussian channel is  lower bounded by $1/(1 + \snr)$ for all $N$. 

Nevertheless, the distribution induced by a good code and the Gaussian distribution are similar in the sense that their  mutual-information functions must become arbitrarily close for large $N$. It is natural  to ask whether this closeness implies other similarities between the good code and the Gaussian distribution.   The next result shows that closeness in mutual information also implies closeness in MMSE.

\begin{theorem}\label{thm:MMSE_good_code}
For any $N$-dimensional random vector $\bX$ satisfying  the average power constraint \eqref{eq:power} and the mutual information lower bound \eqref{eq:I_awgn} the MMSE function satisfies
\begin{align}
 \frac{e^{-2\eps}}{1+ s }  - \frac{1 - e^{-2\eps}}{ \snr  - s}  \le M_{\bX}(s) & \le \frac{1}{1 + s} \label{eq:M_awgn_bnd}
\end{align}
for all $0 \le s < \snr$. 
\end{theorem}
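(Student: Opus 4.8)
The plan is to treat the two bounds in \eqref{eq:M_awgn_bnd} separately: the upper bound is a routine consequence of Gaussian extremality, while the lower bound is where the real work lies and where the earlier monotonicity result does the heavy lifting.

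For the upper bound $M_{\bX}(s)\le \frac{1}{1+s}$, I would bound the MMSE by the linear MMSE, which depends only on $\cov(\bX)$. Writing $v_1,\dots,v_N$ for the eigenvalues of $\cov(\bX)$, the linear MMSE equals $\frac{1}{N}\sum_{n}\frac{v_n}{1+s v_n}$. Since $v\mapsto \frac{v}{1+sv}$ is increasing and concave, Jensen's inequality combined with the power constraint $\frac{1}{N}\sum_n v_n \le \frac{1}{N}\ex{\|\bX\|_2^2}\le 1$ gives the bound $\frac{1}{1+s}$. This step needs no new ideas.

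For the lower bound the goal is to convert the single scalar hypothesis \eqref{eq:I_awgn} into pointwise control of $M_{\bX}$ at a fixed $s$. First I would integrate the I-MMSE relationship to write $I_{\bX}(\snr)=\int_0^{\snr}\tfrac12 M_{\bX}(t)\,\dd t$ and compare it with the Gaussian benchmark $\tfrac12\log(1+\snr)=\int_0^{\snr}\tfrac12\frac{1}{1+t}\,\dd t$. Subtracting and invoking \eqref{eq:I_awgn} shows that the defect $D(t)\triangleq \frac{1}{1+t}-M_{\bX}(t)$, which is nonnegative by the upper bound just established, satisfies $\int_0^{\snr} D(t)\,\dd t\le 2\eps$; since $D\ge 0$, the same bound holds over any subinterval $[s,\snr]$. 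The crux is then to turn this integral bound into a pointwise statement, and this is exactly where Theorem~\ref{thm:monotonicity_mmse} enters. Setting $k\triangleq k_{\bX}(s)=\frac{1}{M_{\bX}(s)}-s$, monotonicity of $k_{\bX}$ gives $k_{\bX}(t)\ge k$ for all $t\ge s$, hence $M_{\bX}(t)=\frac{1}{t+k_{\bX}(t)}\le\frac{1}{t+k}$ throughout $[s,\snr]$, so $D(t)\ge \frac{1}{1+t}-\frac{1}{t+k}$ there. Integrating the elementary right-hand side and using $s+k=1/M_{\bX}(s)$ yields $\log\frac{1+\snr}{1+s}-\log\big(1+(\snr-s)M_{\bX}(s)\big)\le 2\eps$; solving for $M_{\bX}(s)$ and clearing the common denominator $(1+s)(\snr-s)$ reproduces the claimed bound. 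The endpoint $s=0$ follows by continuity of $M_{\bX}$ from $s>0$, where Theorem~\ref{thm:monotonicity_mmse} applies.

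The main obstacle is conceptual rather than computational: an upper bound on the mutual information is a single integral constraint that, on its own, says nothing about the value of $M_{\bX}$ at a particular $s$. The monotonicity of $k_{\bX}$ supplies the missing rigidity — a small $M_{\bX}(s)$ would force $k_{\bX}$, and therefore $M_{\bX}(t)$ for every $t>s$, to stay small, inflating $\int_s^{\snr} D(t)\,\dd t$ beyond $2\eps$ and contradicting \eqref{eq:I_awgn}. Once this mechanism is identified, the remaining steps are elementary integration and algebraic rearrangement.
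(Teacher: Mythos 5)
Your proposal is correct and follows essentially the same route as the paper's proof: the lower bound rests on exactly the same mechanism, namely $k_{\bX}(t)\ge k_{\bX}(s)$ for $t\ge s$ (Theorem~\ref{thm:monotonicity_mmse}) giving $M_{\bX}(t)\le \frac{1}{t+k_{\bX}(s)}$, integrated against the nonnegative defect $\frac{1}{1+t}-M_{\bX}(t)$ over $[s,\snr]\subset[0,\snr]$ and combined with the I-MMSE identity and \eqref{eq:I_awgn}, with identical algebra at the end. The only (cosmetic) divergence is the upper bound, which the paper gets from $M_{\bX}(s)=\frac{1}{k_{\bX}(s)+s}\le\frac{1}{k_{\bX}(0)+s}\le\frac{1}{1+s}$ rather than from your LMMSE-plus-Jensen argument; both are valid.
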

\begin{proof}
For the upper bound, we have
\begin{align}
M_{\bX}(s) & = \frac{1}{k_{\bX}(s) + s} \le  \frac{1}{k_{\bX}(0) + s}   \le \frac{1}{1 + s} 
\end{align}
where the first inequality follows from Theorem~\ref{thm:monotonicity_mmse}  and the second inequality holds because the  assumption $\frac{1}{N} \ex{ \|\bX\|_2^2} \le 1$ implies that $M_{\bX}(0)  \le 1$, and hence $k_{\bX}(0) \ge 1$.  For the lower bound, we use the following chain of inequalities: 
\begin{align}
%\log\left( \frac{1 + \snr}{ 1 + s}\right)   - \log\left( \frac{ k_{\bX}(s) + \snr}{k_{\bX}(s)  + s}\right) 
\log\left( \frac{(1 + \snr)( k_{\bX}(s) + s)}{ (1 + s)(k_{\bX}(s)  + \snr)} \right)
& = \int_s^{\snr}  \left(  \frac{1}{1 + t}  - \frac{1}{k_{\bX}(s) + t}   \right) \, \dd t  \\ 
& \le \int_s^{\snr}  \left(  \frac{1}{1 + t}  - \frac{1}{k_{\bX}(t) + t}  \right)\, \dd t \label{eq:M_awgn_bnd_aa} \\
& = \int_s^{\snr}   \left( \frac{1}{1 + t}  - M_{\bX}(t) \right)\, \dd t  \\
& \le \int_0^{\snr} \left(   \frac{1}{1 + t}  - M_{\bX}(t) \right) \, \dd t \label{eq:M_awgn_bnd_a}\\
& = \log(1 + \snr) - 2 I_{\bX}(\snr) \label{eq:M_awgn_bnd_b}\\
& \le 2 \eps, \label{eq:M_awgn_bnd_c}
\end{align} 
where \eqref{eq:M_awgn_bnd_aa} follows from Theorem~\ref{thm:monotonicity_mmse},  \eqref{eq:M_awgn_bnd_a} holds because of the upper bound in \eqref{eq:M_awgn_bnd} ensures that the integrand is non-negative, and \eqref{eq:M_awgn_bnd_c} follows from the assumed lower bound on the mutual information. Exponentiating both sides, rearranging terms, and recalling the definition of $k_{\bX}(s)$ leads to the stated lower bound. 
\end{proof}

An immediate consequence of Theorem~\ref{thm:MMSE_good_code} is that the MMSE function associated with a sequence of good codes undergoes a phase transition in the large-$N$ limit. In particular,
\begin{align}
\lim_{N \to \infty} M_{\bX}(s) & = 
\begin{dcases}
\frac{1}{1 + s} , & 0 \le s < \snr\\
0, & \snr < s.
\end{dcases} \label{eq:good_code_pt}
\end{align}
The case $s\in [  \snr, \infty)$ follows from the definition of a good code and the monotonicity of the MMSE function. The case $s \in [0,  \snr)$ follows from Theorem~\ref{thm:MMSE_good_code} and the fact that $\eps$ can be arbitrarily small. 

An analogous result for binary linear codes on the Gaussian channel can be found in~\cite{bhattad:2007}.
The characterization of the asymptotic MMSE for good Gaussian codes, described by~\eqref{eq:good_code_pt}, was also obtained previously using ideas from statistical physics \cite{merhav:2010}. The  derivation presented in this chapter, which relies only on the monotonicity of $k_{\bX}(s)$, bypasses some technical difficulties encountered in the previous approach.

\subsection{Incremental-Information Sequence}\label{sec:inc_mi}  
We now consider a different approach for decomposing the mutual information between random vectors $\bX = (X_1, \dots X_N)$ and $\bY = (Y_1, \dots Y_M)$. The main idea is to study the increase in information associated with new observations.  In order to make general statements, we average over all possible presentation orders for the elements of $\bY$.   To this end, we define the \emph{information sequence}  $\{ I_{m} \}$ according to: 
\begin{align*}
I_m & \triangleq \frac{1}{M!} \sum_{\pi }  I(\bX ; Y_{\pi(1)}, \dots Y_{\pi(m)}), \qquad m = 1, \dots, M,
\end{align*}
where the sum is over all permutations $\pi :[ M]\to [M]$. Note that each summand on the right-hand side is the mutual information between $\bX$ and an $m$-tuple of the observations. The average over all possible permutations can be viewed as the expected mutual information when the order of observations is chosen uniformly at random. 

Due to the random ordering, we will find that $I_m$ is an increasing sequence with: $0  = I_0 \le I_1 \le \dots \le I_{M-1}  \le I_M = I(\bX; \bY)$. To  study the increase in information with additional observations,  we focus on the first- and second-order difference sequences, which are defined as follows: 
\begin{align*}
I'_{m}& \triangleq  I_{m+1}- I_m \\
I''_{m}& \triangleq  I'_{m+1}- I'_m.
\end{align*}
Using the chain rule for mutual information, it is straightforward to show that the first and second order difference sequences can also be expressed as  
\begin{align}
I'_{m} & =   \frac{1}{M!}  \sum_{\pi }  I(\bX ; Y_{\pi(m+1)} \mid Y_{\pi(1)}, \dots Y_{\pi(m)}) \notag \\
I''_{m} & =     \frac{1}{M!}  \sum_{\pi }  I(Y_{\pi(m+2)}; Y_{\pi(m+1)} \mid \bX, Y_{\pi(1)}, \dots Y_{\pi(m)})  \notag\\
& \quad  -    \frac{1}{M!}  \sum_{\pi }  I(Y_{\pi(m+2)}; Y_{\pi(m+1)} \mid Y_{\pi(1)}, \dots Y_{\pi(m)}) . \label{eq:Ipp_alt} 
\end{align}

The incremental-information approach is well suited to observation models in which the entries of $\bY$ are conditionally independent given $\bX$, that is 
\begin{align}
p_{\bY  \mid \bX}(\by  \mid \bx) & = \prod_{m=1}^M  p_{Y_k \mid \bX}(y_k   \mid \bx) \label{eq:cond_ind}.
\end{align}
The class of models satisfying this condition is quite broad and includes memoryless channels and generalized linear models as special cases. The significance of the conditional independence assumption is summarized in the following result: 

\begin{theorem}[Monotonicity of Incremental Information]\label{thm:mono_inc_I}
The first-order difference sequence $\{ I'_m\}$ is monotonically decreasing for any observation model satisfying the conditional independence condition in \eqref{eq:cond_ind}. 
\end{theorem}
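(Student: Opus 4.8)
The plan is to work directly from the alternative expression for the second-order difference $I''_m$ displayed in \eqref{eq:Ipp_alt}, and to show that $I''_m \le 0$ for every $m$. Since $I''_m = I'_{m+1} - I'_m$ by definition, this inequality is precisely the claim that $\{I'_m\}$ is monotonically non-increasing, which is what must be proved. So the entire argument reduces to controlling the signs of the two averaged sums appearing in \eqref{eq:Ipp_alt}.

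First I would observe that the conditional independence assumption \eqref{eq:cond_ind} forces the first sum in \eqref{eq:Ipp_alt} to vanish term by term. Indeed, \eqref{eq:cond_ind} says that, conditioned on $\bX$, the coordinates of $\bY$ are mutually independent. Conditioning further on an arbitrary subset $Y_{\pi(1)}, \dots, Y_{\pi(m)}$ does not destroy this structure: because those coordinates are, given $\bX$, independent of the pair $(Y_{\pi(m+1)}, Y_{\pi(m+2)})$, the conditional law of that pair given $(\bX, Y_{\pi(1)}, \dots, Y_{\pi(m)})$ still factors into the product of the two individual conditionals. Hence
\[
I\left(Y_{\pi(m+2)}; Y_{\pi(m+1)} \mid \bX, Y_{\pi(1)}, \dots, Y_{\pi(m)}\right) = 0
\]
for each permutation $\pi$, and so the first averaged sum is identically zero.

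What remains is the second averaged sum, namely
\[
I''_m = - \frac{1}{M!} \sum_{\pi} I\left(Y_{\pi(m+2)}; Y_{\pi(m+1)} \mid Y_{\pi(1)}, \dots, Y_{\pi(m)}\right).
\]
Each summand is a conditional mutual information and is therefore non-negative, so the average is non-negative and $I''_m \le 0$. This gives $I'_{m+1} \le I'_m$ for all $m$, completing the proof.

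The step requiring the most care — though it is conceptually elementary — is the vanishing of the first term: one must verify that conditioning on the extra observations $Y_{\pi(1)}, \dots, Y_{\pi(m)}$ preserves the conditional independence of $Y_{\pi(m+1)}$ and $Y_{\pi(m+2)}$ given $\bX$. This is immediate from the product form in \eqref{eq:cond_ind}, but it is the only place the hypothesis enters, and it is exactly what makes the intuition rigorous: once $\bX$ is known, no additional observation can induce dependence among the remaining coordinates, so the only surviving contribution is the ``explaining away'' effect captured by the second term, and that effect always diminishes the incremental value of a further observation.
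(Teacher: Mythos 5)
Your proposal is correct and follows essentially the same route as the paper: you show the first averaged sum in \eqref{eq:Ipp_alt} vanishes because the coordinates of $\bY$ remain conditionally independent given $\bX$ even after conditioning on additional observations, and then note that the remaining term is minus an average of non-negative conditional mutual informations, so $I''_m \le 0$. Your added verification that further conditioning preserves the conditional independence is a useful elaboration of a step the paper states without detail, but the argument is the same.
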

\begin{proof}
Under assumption  \eqref{eq:cond_ind}, two new observations $Y_{\pi(m+1)}$ and $Y_{\pi(m+2)} $ are conditionally independent given $\bX$, and thus the first term on the right-hand side of \eqref{eq:Ipp_alt} is zero.  This means that the second-order difference sequence is non-positive, which implies monotonicity of the first-order difference. 
\end{proof}

The monotonicity in Theorem~\ref{thm:mono_inc_I} can also be seen as a consequence of the subset inequalities studied by Han; see \cite[Chapter~17]{cover:2006}. Our focus on the incremental information is also related to prior work in coding theory that uses an integral-derivative relationship for the mutual information called the area theorem~\cite{measson:2009}.

Similar to the monotonicity properties studied in Section~\ref{sec:I_MMSE}, the monotonicity of the  first-order difference  imposes a number of constraints on the mutual information sequence. Some examples illustrating the usefulness of these constraints  will be provided in the following sections.

\subsection{Standard Linear Model with IID Measurement Vectors}\label{sec:slm_iid}
\index{standard linear model|(} 

We now provide an example of how the incremental-information sequences can be used in the context of the standard linear model~\eqref{eq:linear_model_1}. We focus on the setting where the measurement vectors $\{\bA_m\}$ are drawn iid from a distribution on $\reals^N$. In this setting, the entire observation consists of the pair $(\bY, \bA)$ and the mutual information sequences defined in Section~\ref{sec:inc_mi} can be expressed compactly as
\begin{align*}
I_m & = I(\bX; Y^m \mid  \bA^m) \\
I'_m & = I(\bX; Y_{m+1} \mid Y^m, \bA^{m+1}) \\
I''_m & = - I(Y_{m+1}; Y_{m+2} \mid Y^m, \bA^{m+2}),
\end{align*}
where  $Y^m = (Y_1, \dots, Y_m)$ and $\bA^m = (\bA_1, \dots, \bA_m)$. In these expressions, we do not average over permutations of measurement indices because the distribution of the observations is permutation invariant. Furthermore, the measurement vectors only appear as conditional variables in the mutual information because they are independent of all other random variables.

The sequence perspective can also be applied to other quantities of interest. For example, the MMSE sequence $\{ M_m\}$ is defined by
\begin{align}
M_m & \triangleq   \frac{1}{N} \ex{ \left\| \bX - \ex{ \bX \mid Y^m, A^m} \right\|_2^2},
\label{eq:mmse_seq}
\end{align}
where $M_0 = \frac{1}{N} \gtr(\cov(\bX))$ and $M_M = \frac{1}{N}  \mmse(\bX \mid \bY, \bA)$. By the data-processing inequality for MMSE, it follows that $M_m$ is a decreasing sequence. 

Motivated by the I-MMSE relations in Section~\ref{sec:I_MMSE}, one might wonder if there also exists a relationship between the mutual information and MMSE sequences. For simplicity, consider the setting where the measurement vectors are iid with mean zero and covariance proportional to the identity matrix: 
\begin{align}
 \ex{ \bA_m} = 0 , \qquad \ex{ \bA_m \bA_m^T} =  N^{-1} \Id_N. \label{eq:Am_isotropic} 
\end{align}
One example of a distribution satisfying these constraints is when the entries of $\bA_m$ are iid with mean zero and variance $1/N$. Another example is when $\bA_m$ is drawn uniformly from a collection of $N$  mutually orthogonal unit vectors. 

\begin{theorem}\label{thm:Ip_UB} 
Consider the standard linear model~\eqref{eq:linear_model_1} with iid measurement vectors $\{\bA_m\}$ satisfying~\eqref{eq:Am_isotropic}. If $\bX$ has finite covariance, then the sequences $\{ I'_m\}$ and $\{M_m\}$ satisfy
\begin{align}
I'_m \le \frac{1}{2} \log(1 + M_m) \label{eq:thm:Ip_UB}
\end{align}
for all integers $m$. 
\end{theorem}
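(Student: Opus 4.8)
The plan is to reduce the vector incremental-information term to a scalar Gaussian channel and then apply the fact that the Gaussian input maximizes mutual information. Write the new measurement as $Y_{m+1} = Z + W_{m+1}$, where $Z \triangleq \langle \bA_{m+1}, \bX \rangle$ and $W_{m+1} \sim \normal(0,1)$ is independent of everything else. Conditioning on $G \triangleq (Y^m, \bA^{m+1})$, the quantity $Z$ is a deterministic function of $\bX$ (since $\bA_{m+1}$ is part of $G$), and $Y_{m+1}$ depends on $\bX$ only through $Z$. A short chain-rule argument then gives the exact identity $I'_m = I(\bX; Y_{m+1} \gmid G) = I(Z; Y_{m+1} \gmid G)$: expanding $I(\bX, Z; Y_{m+1} \gmid G)$ two ways, the cross terms $I(Z; Y_{m+1} \gmid \bX, G)$ and $I(\bX; Y_{m+1} \gmid Z, G)$ both vanish (the first because $Z$ is determined by $(\bX, G)$, the second because $W_{m+1} \perp \bX$ given $(Z,G)$).

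Next I would invoke the scalar Gaussian max-entropy bound conditionally. For each realization $G = g$, the channel $Y_{m+1} = Z + W_{m+1}$ is a standard additive white Gaussian noise channel, so
\begin{align*}
I(Z; Y_{m+1} \gmid G = g) \le \tfrac{1}{2} \log\big(1 + \var(Z \gmid G = g)\big),
\end{align*}
which follows by upper-bounding the conditional differential entropy $h(Y_{m+1} \gmid G=g)$ by that of a Gaussian with matching variance $\var(Z\gmid G=g) + 1$ and subtracting $h(W_{m+1})$. Averaging over $G$ and using Jensen's inequality for the concave map $x \mapsto \log(1+x)$ yields
\begin{align*}
I'_m \le \ex{ \tfrac{1}{2}\log\big(1 + \var(Z \gmid G)\big)} \le \tfrac{1}{2}\log\big(1 + \ex{\var(Z \gmid G)}\big).
\end{align*}

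The crux is then to evaluate $\ex{\var(Z \gmid G)}$ and show it equals $M_m$; this is exactly where the isotropy assumption~\eqref{eq:Am_isotropic} enters. Because $\bA_{m+1}$ is independent of $(\bX, Y^m, \bA^m)$, conditioning on it does not alter the law of $\bX$ given $(Y^m, \bA^m)$, so for fixed $\bA_{m+1}$ we have $\var(Z \gmid G) = \bA_{m+1}^T \cov(\bX \gmid Y^m, \bA^m) \bA_{m+1}$. Taking the expectation over the independent $\bA_{m+1}$ first, the identity $\ex{\ba^T \bM \ba} = \gtr(\bM \, \ex{\ba \ba^T})$ together with $\ex{\bA_{m+1}\bA_{m+1}^T} = N^{-1}\Id_N$ collapses this to $N^{-1}\ex{\gtr \cov(\bX \gmid Y^m, \bA^m)}$, which is precisely the definition~\eqref{eq:mmse_seq} of $M_m$. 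Substituting back gives the claim.

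The main obstacle I anticipate is not any single hard estimate but getting the conditioning bookkeeping exactly right in the first paragraph---in particular justifying that the reduction $I'_m = I(Z; Y_{m+1}\gmid G)$ is an equality (not merely an inequality) and that $\cov(\bX \gmid Y^m, \bA^m, \bA_{m+1}) = \cov(\bX \gmid Y^m, \bA^m)$. Everything else is the standard Gaussian-channel inequality plus a one-line second-moment computation that is tailored to make the isotropy hypothesis deliver $M_m$ on the nose.
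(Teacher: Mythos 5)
Your proposal is correct and follows essentially the same route as the paper's proof: bound $I'_m$ by the Gaussian-input capacity of the scalar channel $Y_{m+1} = \langle \bA_{m+1}, \bX\rangle + W_{m+1}$ conditioned on the past, apply Jensen's inequality to move the expectation inside the logarithm, and use the isotropy assumption~\eqref{eq:Am_isotropic} to identify $\ex{\var(\langle \bA_{m+1}, \bX\rangle \mid Y^m, \bA^{m+1})}$ with $M_m$. The only difference is that you spell out the chain-rule reduction $I(\bX; Y_{m+1} \gmid G) = I(Z; Y_{m+1} \gmid G)$ explicitly, which the paper leaves implicit in its appeal to the Gaussian-input bound.
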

\begin{proof}
Conditioned on the observations $(Y^m, \bA^{m+1)}$, the variance of a new measurement can be expressed as
\begin{align*}
\var(\langle \bA_{m+1}, \bX \rangle \mid Y^m , \bA^{m+1}) & =  \bA_{m+1}^T \cov(\bX \mid Y^m, \bA^m)  \bA_{m+1}.
\end{align*}
Taking the expectation of both sides and leveraging the assumptions in~\eqref{eq:Am_isotropic}, we see that
\begin{align}
\ex{ \var(\langle \bA_{m+1}, \bX \rangle \mid Y^m , \bA^{m+1}) } 
& = M_{m} \label{eq:thm:Ip_UB_b}
\end{align}
Next, starting with the fact that mutual-information in a Gaussian channel is maximized when the input (i.e., $\langle \bA_{m+1}, \bX\rangle$) is Gaussian, we have
\begin{align}
I'_m & =  I(\bX; Y_{m+1} \mid Y^m, \bA^{m+1}) \notag \\
& \le \ex{ \frac{1}{2} \log\left( 1 + \var(\langle \bA_{m+1}, \bX \rangle \mid Y^m , \bA^{m+1})  \right)} \notag \\
& \le \ex{ \frac{1}{2} \log\left( 1 + \ex{  \var(\langle \bA_{m+1}, \bX \rangle \mid Y^n , \bA^{m+1})}  \right)}, \label{eq:thm:Ip_UB_d}
\end{align}
where the second step follows from Jensen's inequality and the concavity of the logarithm. Combining~\eqref{eq:thm:Ip_UB_b} and \eqref{eq:thm:Ip_UB_d} gives the stated inequality. 
\end{proof}

Theorem~\ref{thm:Ip_UB} is reminiscent of the I-MMSE relation for Gaussian channels in the sense that it relates a change in mutual information to an MMSE estimate.  One key difference, however, is that~\eqref{eq:thm:Ip_UB} is an inequality instead of an equality. The difference between the right-hand and left-hand sides of~\eqref{eq:thm:Ip_UB} can be viewed as a measure of the difference between the posterior distribution of a new observation $Y_{m+1}$ given observations $(Y^m, \bA^{m+1})$ and the Gaussian distribution with matched first and second moments~\cite{reeves:2016a,reeves:2016,reeves:2016d,reeves:2017c,reeves:2017d}.

Combining Theorem~\ref{thm:Ip_UB}  with the monotonicity of the first-order difference sequence (Theorem~\ref{thm:mono_inc_I}) leads to a lower bound on the MMSE in terms of the total mutual information.   

\begin{theorem}\label{thm:Mm_LB} 
Under the assumptions of Theorem~\ref{thm:Ip_UB}, we have
\begin{align}
M_k \ge \left( \exp\left( \frac{ 2 I_m - k \log(1 + M_0)}{ m - k}    \right) - 1 \right) 
\end{align}
for all integers $0 \le k < m$. 
\end{theorem}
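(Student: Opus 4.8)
The plan is to write the total information $I_m$ as a telescoping sum of the first-order increments and then bound the two halves of this sum with the two tools already in hand. Since $I_0 = 0$ and $I'_j = I_{j+1} - I_j$, we have $I_m = \sum_{j=0}^{m-1} I'_j$. I would split this sum at the index $k$, writing $I_m = \sum_{j=0}^{k-1} I'_j + \sum_{j=k}^{m-1} I'_j$, and estimate each block differently.

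For the first $k$ increments I would use only the coarse fact that the MMSE sequence is non-increasing, so $M_j \le M_0$ for every $j$; combined with Theorem~\ref{thm:Ip_UB} this gives $I'_j \le \frac{1}{2}\log(1+M_j) \le \frac{1}{2}\log(1+M_0)$, whence $\sum_{j=0}^{k-1} I'_j \le \frac{k}{2}\log(1+M_0)$. For the remaining $m-k$ increments the sharper estimate comes from monotonicity of the increments: because the standard linear model with iid measurement vectors and independent noise satisfies the conditional independence condition~\eqref{eq:cond_ind}, Theorem~\ref{thm:mono_inc_I} applies and $\{I'_j\}$ is non-increasing, so $I'_j \le I'_k$ for all $j \ge k$. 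Applying Theorem~\ref{thm:Ip_UB} once more at index $k$ yields $I'_k \le \frac{1}{2}\log(1+M_k)$, and therefore $\sum_{j=k}^{m-1} I'_j \le \frac{m-k}{2}\log(1+M_k)$.

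Adding the two bounds gives $2 I_m \le k\log(1+M_0) + (m-k)\log(1+M_k)$. Since $m > k$, I would then isolate $\log(1+M_k)$ to obtain $\frac{2 I_m - k\log(1+M_0)}{m-k} \le \log(1+M_k)$, and finally exponentiate and subtract one to recover the claimed lower bound on $M_k$.

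I do not expect a substantive obstacle: the result is essentially a bookkeeping combination of Theorems~\ref{thm:Ip_UB} and~\ref{thm:mono_inc_I}. The one point requiring care --- and the only genuine idea in the argument --- is the asymmetric way the sum is split: the first block is controlled by the crude $M_0$ bound, which is all the monotonicity of $\{M_j\}$ delivers for small $j$ without reintroducing unknown intermediate MMSE terms, while the second block exploits the decreasing increments to concentrate the whole estimate on the single quantity $M_k$ we wish to bound. The only modeling hypothesis to verify is that the conditional independence condition of Theorem~\ref{thm:mono_inc_I} genuinely holds here, which it does because the measurement noise $\{W_m\}$ is iid and independent of $\bX$.
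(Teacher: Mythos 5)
Your proposal is correct and follows essentially the same route as the paper: telescope $I_m$ into increments, split the sum at $k$, bound the tail block by $(m-k)I'_k$ via monotonicity of the increments, apply Theorem~\ref{thm:Ip_UB}, and rearrange. The only cosmetic difference is in the first block, where you invoke Theorem~\ref{thm:Ip_UB} at each index and then use monotonicity of $\{M_j\}$, whereas the paper uses monotonicity of $\{I'_j\}$ to reduce to $kI'_0$ and applies Theorem~\ref{thm:Ip_UB} once at index $0$; both yield the identical bound $\tfrac{k}{2}\log(1+M_0)$.
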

\begin{proof}
For any $0 \le k < m$, the monotonicity of $I'_m$ (Theorem~\ref{thm:mono_inc_I}) allows us to write $I_m = \sum_{\ell = 0}^{m-1} I'_\ell =  \sum_{\ell = 0}^{k-1} I'_k  +  \sum_{\ell = k}^{m-1} I'_\ell \le   k I'_0  + (m-k) I'_k $. Using Theorem~\ref{thm:Ip_UB} to upper bound the terms  $I'_0$ and $I'_k$ and then rearranging terms leads to the stated result. 
\end{proof}

Theorem~\ref{thm:Mm_LB} is particularly meaningful when the mutual information is large. For example if the mutual information satisfies the lower bound
\begin{align*}
 I_{m} \ge (1- \eps) \frac{n}{2} \log(1 + M_0 )  
 \end{align*}
 for some $\eps \in [0, 1)$, then Theorem~\ref{thm:Mm_LB} implies that
 \begin{align*}
 M_k & \ge    \left( \left(1 + M_0\right)^{1 -  \frac{ \eps   }{1 - k   / m}} - 1 \right) 
\end{align*}
for all integers $0 \le k < m$. As $\eps$ converges to zero, the right-hand side of this inequality increases to $M_0$. In other words, a large value of $I_m$ after $m$ observations implies that the MMSE sequence is nearly constant for all $k$ that are sufficiently small relative to $m$.
\index{standard linear model|)}  
% !TEX root = arxiv_main.tex

\section{Proving the Replica-Symmetric Formula}
\label{sec:proof}

The authors' prior work~\cite{reeves:2016a,reeves:2016,reeves:2016d} provided the first rigorous proof of the replica formulas  \eqref{eq:I_limit_IID} and \eqref{eq:M_limit_IID} for the standard linear model with an iid signal and a Gaussian sensing matrix.

In this section, we give an overview of the proof.
It begins by focusing on the increase in mutual information associated with adding a new observation as described in Sections~\ref{sec:inc_mi} and \ref{sec:slm_iid}. 
Although this approach is developed formally using finite-length sequences, we describe the large-system limit first for simplicity.

\subsection{Large-System Limit and Replica Formulas}
For the large-system limit, the increase in mutual information $\cI(\delta)$ with additional measurement is characterized by its derivative $\cI'(\delta)$.
The main technical challenge is to establish the following relationships:
\begin{align}
 \text{fixed-point formula} &&  \cM(\delta) &= M_{X}\Big( \frac{\delta}{1 + \cM(\delta)} \Big)  && \label{eq:fp} \\
  \text{I-MMSE formula} & &  % \cI(\delta)  &=  \int_0^\delta \frac{1}{2} \log\left(1 + \cM(\gamma) \right) \, \dd \gamma
 \cI'(\delta)  &=   \frac{1}{2} \log\big(1 + \cM(\delta) \big), \label{eq:I-MMSE} 
\end{align}
where these equalities hold almost everywhere but not at phase transitions.

The next step is to use these two relationships to prove that the replica formulas, \eqref{eq:I_limit_IID} and \eqref{eq:M_limit_IID}, are exact.
First, by solving the minimization over $s$ in~\eqref{eq:I_limit_IID}, one finds that any local minimizer must satisfy the fixed-point formula~\eqref{eq:fp}.
In addition, by differentiating $\cI(\delta)$ in \eqref{eq:I_limit_IID}, one can show that $\cI'(\delta)$ must satisfy the I-MMSE formula~\eqref{eq:I-MMSE}.
Thus, if the fixed-point formula \eqref{eq:fp} defines $\cM(\delta)$ uniquely, then the mutual information $\cI(\delta)$ can be computed by integrating~\eqref{eq:I-MMSE} and the proof is complete.
However, this only happens if there are no phase transitions.
Later, we will discuss how to handle the case of multiple solutions and phase transitions.

\subsection{Information and MMSE Sequences}

To establish the large-limit formulas, \eqref{eq:fp} and \eqref{eq:I-MMSE}, the authors' proof focuses on  functional properties of the incremental mutual information as well as the MMSE sequence $\{ M_m\}$ defined by~\eqref{eq:mmse_seq}.
In particular, the results in \cite{reeves:2016a,reeves:2016} first establish the following approximate relationships between the mutual information and MMSE sequences: 
\begin{align}
 \text{fixed-point formula} && M_{m} &\approx M_X\left(  \frac{m/N}{1 + M_{m}} \right)  && \label{eq:M_fixed_point} \\  
  \text{I-MMSE formula} & &   I'_{m} &\approx \frac{1}{2} \log\left(1 + M_{m}\right). && \label{eq:Ip_MMSE_inq}
\end{align}
The fixed-point formula~\eqref{eq:M_fixed_point} shows that the MMSE $M_m$ corresponds to a scalar estimation problem whose signal-to-noise ratio is a function the number of observations $m$ as well as $M_m$. In Section~\ref{sec:rotation}, it is shown how the standard linear model can be related to a scalar estimation problem of one signal entry. Finally, the I-MMSE formula~\eqref{eq:Ip_MMSE_inq} implies that the increase in information with a new measurement corresponds to a single use of a Gaussian channel with a Gaussian input whose variance is matched to the MMSE.
The following theorem, from \cite{reeves:2016a,reeves:2016}, quantifies the precise sense in which these approximations hold.

\begin{theorem}\label{thm:slm} 
Consider the standard linear model \eqref{eq:linear_model_1} with iid Gaussian measurement vectors $\bA_m \sim \normal(0, N^{-1} \Id_N)$. If the entries of  $\bX$ are iid with bounded fourth moment $\ex{ X_n^4} \le B$, then the sequences $\{I'_m\}$ and $\{M_m\}$ satisfy 
\begin{align}
\sum_{m=1}^{\lceil \delta N \rceil} \left| I'_m - \frac{1}{2} \log(1 + M_m) \right| \le C_{B, \delta}\,  N^{\alpha} \\
\sum_{m=1}^{\lceil \delta N \rceil}  \left| M_m - M_X\left(  \frac{m/N}{1 + M_{m}} \right)  \right| \le C_{B, \delta}\,  N^{\alpha} 
\end{align}
for every integer $N$ and $\delta > 0$ where $\alpha \in (0, 1)$ is a universal constant and $C_{B, \delta}$ is a constant that depends only on the pair $(B, \delta)$. 
\end{theorem}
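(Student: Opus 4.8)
The plan is to prove both bounds one measurement at a time, by quantifying the two distinct sources of slack hidden in the one-sided inequality of Theorem~\ref{thm:Ip_UB}, and then summing the per-measurement errors. Throughout I write $Z_{m+1} = \langle \bA_{m+1}, \bX\rangle$ and condition on the past data $\mathcal{F}_m = (Y^m, \bA^{m+1})$; since $Y_{m+1}$ depends on $\bX$ only through $Z_{m+1}$, we have $I'_m = I(Z_{m+1}; Y_{m+1}\mid \mathcal{F}_m)$ with $Y_{m+1} = Z_{m+1}+W_{m+1}$. Note that, given the past measurements $(Y^m,\bA^m)$, the law of $Z_{m+1}$ is the image of the posterior of $\bX$ under the fresh random linear functional $\bx\mapsto\langle\bA_{m+1},\bx\rangle$.

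For the first (I-MMSE) bound, Theorem~\ref{thm:Ip_UB} guarantees the summand is nonnegative, so it suffices to bound $\sum_m \bigl(\tfrac12\log(1+M_m) - I'_m\bigr)$, which I would split into two pieces. The first is a \emph{Jensen gap}: for each realization of $\mathcal{F}_m$ the conditional variance $v = \bA_{m+1}^T \cov(\bX\mid Y^m,\bA^m)\bA_{m+1}$ satisfies $\ex{v}=M_m$ by \eqref{eq:thm:Ip_UB_b}, and the step \eqref{eq:thm:Ip_UB_d} discards $\tfrac12\log(1+M_m)-\ex{\tfrac12\log(1+v)}$, which is controlled by the fluctuation of $v$. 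Because $\bA_{m+1}$ is an independent isotropic Gaussian vector, $v$ is a Gaussian quadratic form whose variance is of order $N^{-1}\|\cov(\bX\mid Y^m,\bA^m)\|_F^2$, tying this gap to the same Frobenius norm that governs $M'_{\bX}$ in \eqref{eq:Mprime}; upon summation the resulting increments telescope and are bounded by $M_0$. The second piece is the \emph{non-Gaussianity gap} $\ex{D}$, where $D$ is the relative entropy between the conditional law of $Z_{m+1}$ and the matched-moment Gaussian. The key observation is that $\bA_{m+1}$ is a fresh isotropic Gaussian direction independent of the posterior, so $Z_{m+1}$ is a random projection of a high-dimensional measure and is approximately Gaussian (a Diaconis--Freedman / projection-CLT phenomenon); its non-Gaussianity is governed by the fourth cumulant of $Z_{m+1}$, which the bounded-fourth-moment assumption $\ex{X_n^4}\le B$ keeps under control.

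For the second (fixed-point) bound, I would use the rotational invariance and isotropy of the Gaussian ensemble to reduce the vector MMSE $M_m$ to a scalar estimation problem for a single coordinate, as anticipated by the fixed-point formula \eqref{eq:M_fixed_point}. Each of the $m$ measurements conveys information about a fixed entry $X_1$ through an effective Gaussian channel whose noise is inflated by the residual uncertainty $M_m$ in the remaining coordinates; accumulating these contributions produces the self-consistent effective signal-to-noise ratio $\tfrac{m/N}{1+M_m}$, so $M_m$ should match $M_X$ evaluated there. Quantitatively, I would compare $M_m$ to this scalar MMSE by a perturbation/interpolation argument, controlling the error through concentration of the relevant quadratic forms (e.g.\ $\|\bX\|^2/N$ and $\bA_{m+1}^T\cov(\bX\mid Y^m,\bA^m)\bA_{m+1}$) around their means, while the appearance of $M_m$ inside the argument of $M_X$ is absorbed by a stability (Lipschitz) estimate for the scalar MMSE function.

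The main obstacle, and the technical heart of both bounds, is establishing the quantitative approximate Gaussianity of the conditional law of the new measurement with an error that is summable in $m$. The posterior of $\bX$ given $(Y^m,\bA^m)$ is an arbitrarily complicated, data-dependent measure that is moreover correlated with the past measurement vectors, so one must show that its random Gaussian projection is close to Gaussian with a moment-based bound uniform enough that the $\lceil\delta N\rceil$ per-measurement errors sum to only $O(N^{\alpha})$ with $\alpha<1$. This forces the use of concentration of measure for the Gaussian matrix ensemble together with careful bookkeeping of the dependence between the posterior and $\bA_{m+1}$, and it is precisely this per-term rate (of order $N^{\alpha-1}$ on average) that pins down the exponent $\alpha$.
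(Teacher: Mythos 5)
Your high-level ingredients are the right ones and match the paper's outline: the slack in Theorem~\ref{thm:Ip_UB} splits into a Jensen term plus a non-Gaussianity term for the conditional law of $\langle \bA_{m+1},\bX\rangle$, both controlled by the mean-squared covariance $\ex{\|\cov(\bX\mid Y^m,\bA^m)\|_F^2}$ via a projection CLT, and the fixed-point formula comes from the interference-subtraction reduction of Section~\ref{sec:rotation}. But there is a genuine gap at exactly the point you flag as ``the main obstacle'': you offer no mechanism for summing the per-measurement errors to $O(N^\alpha)$. Your fallback --- concentration of measure giving a per-term rate of order $N^{\alpha-1}$ ``on average'' --- cannot work, because near a phase transition the posterior is genuinely long-range correlated: the normalized mean-squared covariance $N^{-2}\ex{\|\cov(\bX\mid Y^m,\bA^m)\|_F^2}$ can be of order one (the upper end of \eqref{eq:msc_bnds}), so the per-term error there is $\Theta(1)$, and no amount of concentration over the fresh direction $\bA_{m+1}$ rules this out. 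Likewise, your claim that the Jensen increments ``telescope and are bounded by $M_0$'' implicitly invokes a discrete I-MMSE identity $M_m-M_{m+1}\approx N^{-2}\ex{\|\cov\|_F^2}$ that is exact only for the continuous-SNR Gaussian channel; for the measurement process it amounts to assuming the decoupling you are trying to prove.

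The missing idea is the detour through the \emph{second-order differences} of the information sequence. Theorem~\ref{thm:cov_to_Ipp} bounds $N^{-2}\ex{\|\cov(\bX\mid Y^m,\bA^m)\|_F^2}\le C_B\,|I''_m|^{1/4}$, and the monotonicity of $I'_m$ (Theorem~\ref{thm:mono_inc_I}) makes $\sum_m|I''_m|$ telescope to at most $I_1=I'_0\le\tfrac12\log(1+M_0)=O(1)$; H\"older, or equivalently the cardinality bound of Theorem~\ref{thm:card_bnd}, then gives $\sum_{m\le\lceil\delta N\rceil}|I''_m|^{1/4}=O(N^{3/4})$. This is the only place the exponent $\alpha<1$ comes from: it shows that the set of measurements at which the posterior is significantly correlated is sublinear in $N$, even though nothing prevents the per-term error from being order one on that set. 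The paper states explicitly that this fact is the key step in the proof of Theorem~\ref{thm:slm}. Without it, your argument establishes the per-measurement approximations only away from phase transitions and cannot close either sum.
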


Theorem~\ref{thm:slm} shows that the normalized sum of the cumulative absolute error in approximations \eqref{eq:M_fixed_point} and \eqref{eq:Ip_MMSE_inq} grows sub-linearly with the vector length $N$. 
Thus, if one normalizes these sums by $M=\delta N$, then the resulting expressions converge to zero as $N\to\infty$.
This is sufficient to establish \eqref{eq:fp} and \eqref{eq:I-MMSE}.

\begin{figure}[t]
\centering
\begin{tikzpicture}
    %	\node[anchor=south west,inner sep=0, scale = 0.8] at (0,0) {\includegraphics[trim={0 .28cm .2cm .25cm}]{figures/mmse_fig_A_plain}};
	
	 \node[anchor=south west,inner sep=0, scale = 0.9] at (0,0) {\includegraphics[height=6.5cm]{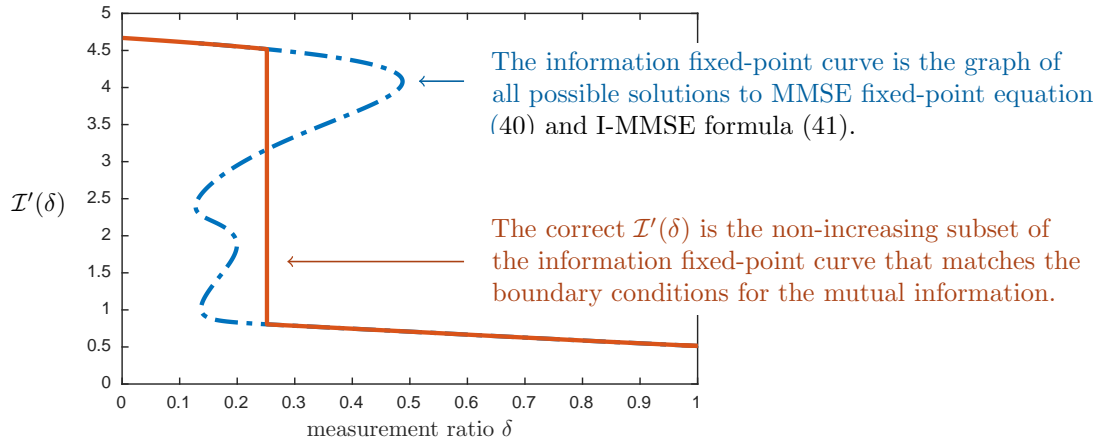}};
	 \node[anchor=south west,inner sep=0, scale = 0.9] at (-3,0) {$\quad$};

	\draw [semithick,  <-,color = {rgb:red,0;green,0.447;blue,0.741}] (4.45, 4.8) -- (5.05,4.8);
        \node[align =left,fill=white, text = {rgb:red,0;green,0.447;blue,0.741}, scale = 1, text width= 8 cm, anchor =west] at (5.3,4.6) {The information fixed-point curve is the graph of \\ all possible solutions to MMSE  fixed-point  equation \\ \eqref{eq:fp} and I-MMSE formula \eqref{eq:I-MMSE}. };
        \draw [semithick,  <-,color = {rgb:red,0.851;green,0.325;blue,0.098}] (2.70, 2.4) -- (5.05,2.4);
        \node[ align =left, fill=white, text = {rgb:red,0.851;green,0.325;blue,0.098}, scale = 1, text width= 8 cm, anchor = west] at (5.3,2.4) {The correct $\cI'(\delta)$ is the non-increasing subset of \\ the information fixed-point curve that matches the \\ boundary conditions for the mutual information.};
        \node[fill=white,text height=0.75cm,text width=3cm] at (6.8,3.6) {};

% 	\node[ align =center, text = {rgb:red,0.851;green,0.325;blue,0.098}, scale = 0.8] at (3.8,1.8) {$\frac{1}{2}\log(1 +\cM)$};
	%\\
%	$\cI'(\delta)$};
%		\node at (-1.5,2.5) [align = center, scale = 0.8]{%derivative of $\cI(\delta)$\\[8pt]
%	$\displaystyle \frac{\dd}{ \dd\, \delta}\,  \cI'(\delta)$};
	
%			\node at (-.5,2.5) [align = center, scale = 0.8]{$ \cI'(\delta)$};

	\node at (-0.6,3.15) [align = left, scale = 0.95]{$\cI'(\delta)$};
	%\node at (3.2 ,-.25) [align = left, scale = 0.8]{measurement rate $\delta$};
	
%	\draw [fill, color = red!20, fill opacity=0.2] (0,0) -- (0,1) -- (1,1) -- (2,.5) -- (2,0);
	%\draw[step=1.0,black,thin,xshift=0.5cm,yshift=0.5cm] (-.5,-.5) grid (5.5,4.5);
\end{tikzpicture}
\caption{\label{fig:proof}
Derivative of the mutual information $\cI'(\delta)$ as a function of the measurement rate $\delta$ for linear estimation with iid Gaussian matrices. A phase transition occurs when the derivative jumps from one branch of the information fixed-point curve to another.}
\end{figure}

\subsection{Multiple Fixed-Point Solutions}

At this point, the remaining difficulty is that the MMSE fixed-point formula \eqref{eq:fp} can have \emph{multiple solutions}, as is illustrated by the information fixed-point curve in Figure~\ref{fig:proof}. In this case, the formulas  \eqref{eq:fp} and  \eqref{eq:I-MMSE} alone are not sufficient to uniquely define the actual mutual information $\cI(\delta)$.

For many signal distributions with a single phase transition, the curve (see Figure~\ref{fig:proof}) defined by the fixed-point formula \eqref{eq:fp} has the following property.
For each $\delta$, there are at most two solutions where the slope of the curve is non-increasing.
Since $\cM(\delta)$ is non-increasing, in this case, it must jump from the upper solution branch to the lower solution branch (see Figure~\ref{fig:proof}) at the phase transition.

The final step in the authors' proof technique is to resolve the location of the phase transition using boundary conditions on the mutual information for $\delta = 0$ and $\delta \to \infty$, which can be obtained directly using different arguments. Under the signal property stated below in Definition~\ref{def:single_cross_px}, it is shown that the only solution consistent with the boundary conditions is the one predicted by the replica method. A graphical illustration of this argument is provided in Figure~\ref{fig:proof}.

\subsection{Formal Statement} 

In this section, we formally state the main theorem in~\cite{reeves:2016}.
To do this, we need the following definition.

\begin{definition} \label{def:single_cross_px}
A signal distribution $p_X$ has the \emph{single-crossing property}\footnote{Regrettably, this is unrelated to the ``single-crossing property'' described earlier that says the MMSE function $M_{\bX} (s)$ may cross the matched Gaussian MMSE $M_{\bZ} (s)$ at most once.}
 if its replica MMSE~\eqref{eq:M_limit_IID} crosses its fixed-point curve~\eqref{eq:fp} at most once.
\end{definition}

For any $\delta>0$, consider a sequence of standard linear models indexed by $N$ where the number of measurements is $M=\lceil \delta N \rceil$, the signal $\bX \in \mathbb{R}^N$ is an iid vector with entries drawn from $p_X$, the $M \times N$ measurement matrix $\bA$ has  iid entries drawn from $\mathcal{N}(0,\frac{1}{N})$,
and the observed vector is $\bY = \bA \bX + \bW$ where $\bW \in \mathbb{R}^M$ is a standard Gaussian vector. 
For this sequence, we can also define a sequence of mutual information and MMSE functions:
\begin{align}
\cI_N (\delta) \triangleq &= I(X^N;Y^{\lceil \delta N \rceil}) \\
\cM_N (\delta) \triangleq &= \frac{1}{N} \mmse(X^N \mid Y^{\lceil \delta N \rceil}).
\end{align}

\begin{theorem} \label{thm:limits} Consider the sequence of problems defined above and assume that $p_X$ has a bounded fourth moment (i.e., $\ex{ X^4} \le B < \infty$) and satisfies the single-crossing property.
Then, it follows that:
\begin{enumerate}[(i)]

\item the sequence of mutual information functions $\cI_N (\delta)$ converges to the replica prediction~\eqref{eq:I_limit_IID}.  In other words, for all $\delta > 0$,
 \begin{align}
 \lim_{N \goto \infty}  \cI_N (\delta ) = \cI (\delta). \notag
\end{align}

\item the sequence of MMSE functions $\cM_N (\delta)$ converges almost everywhere to the replica prediction~\eqref{eq:M_limit_IID}.  In other words, at all continuity points of $\cM(\delta)$, 
 \begin{align}
  \lim_{N \goto \infty}  \cM_N (\delta )=  \cM (\delta). \notag
\end{align}

\end{enumerate}

\end{theorem}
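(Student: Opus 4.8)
The plan is to reduce the theorem to the sequence-level estimates of Theorem~\ref{thm:slm} and then to resolve the ambiguity caused by multiple fixed points using monotonicity together with the single-crossing property. First I would decompose the normalized mutual information into its incremental sequence, writing
\begin{align*}
\cI_N(\delta) = \frac{1}{N}\sum_{m=0}^{\lceil \delta N\rceil - 1} I'_m .
\end{align*}
By the I-MMSE sequence estimate in Theorem~\ref{thm:slm}, each $I'_m$ may be replaced by $\tfrac12\log(1+M_m)$ at a cumulative cost of order $N^{\alpha}$; after dividing by $N$ this error is $o(1)$, so the limit of $\cI_N(\delta)$ is governed entirely by the limiting behavior of the MMSE sequence $\{M_m\}$. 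Likewise the fixed-point estimate in Theorem~\ref{thm:slm} forces any subsequential limit $\cM_\infty(\delta)$ of $M_{\lceil \delta N\rceil}$ to satisfy the fixed-point equation \eqref{eq:fp}, namely $\cM_\infty(\delta) = M_X(\delta/(1+\cM_\infty(\delta)))$, and the corresponding limit $\cI_\infty$ of $\cI_N$ then obeys the integral I-MMSE relation $\cI_\infty(\delta) = \int_0^\delta \tfrac12\log(1+\cM_\infty(t))\,dt$.

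Next I would record the structural properties that constrain the limit. The decreasing increments guaranteed by Theorem~\ref{thm:mono_inc_I} make the piecewise-linear interpolant of $\delta\mapsto\cI_N(\delta)$ concave, so $\cI_\infty$ is concave; and the data-processing inequality makes $\{M_m\}$ non-increasing, so $\cM_\infty$ is a non-increasing selection from the fixed-point curve \eqref{eq:fp}. Combined with the boundary data $\cI_\infty(0)=0$, $\cM_\infty(0)=M_X(0)=\var(X)$, and the large-$\delta$ behavior $\cM_\infty(\delta)\to 0$ obtained by direct arguments, this determines $\cM_\infty$ completely except for the location of a possible jump between branches of the curve.

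The main obstacle is precisely this branch-selection problem: on any interval where \eqref{eq:fp} has a unique solution the identity $\cM_\infty=\cM$ is forced, but where several solutions coexist the limit relations and monotonicity only tell us that $\cM_\infty$ tracks an upper branch and then drops to a lower branch at some transition $\delta^\star$, without specifying $\delta^\star$. To pin it down I would invoke the single-crossing property of Definition~\ref{def:single_cross_px} together with the two boundary conditions: since $\cI_\infty$ is concave with a prescribed value at $\delta=0$ and directly-computable asymptotics as $\delta\to\infty$, the admissible non-increasing selections are strongly constrained, and the single-crossing hypothesis leaves exactly one consistent transition point — the equal-free-energy point at which the replica minimization in \eqref{eq:I_limit_IID} switches branches. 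This gives $\cM_\infty=\cM$ and hence $\cI_\infty=\cI$, establishing part~(i).

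Finally, part~(ii) should follow from part~(i) by a standard concavity argument. The interpolated functions $\cI_N$ are concave and converge pointwise to the concave limit $\cI$, so their supergradients converge at every point where $\cI$ is differentiable. Because $\cI'_N$ is tied to $\cM_N$ through the asymptotically tight I-MMSE relation of Theorem~\ref{thm:slm} (with the one-sided bound of Theorem~\ref{thm:Ip_UB} controlling the remaining direction), this yields $\cM_N(\delta)\to\cM(\delta)$ at every point of differentiability of $\cI$, which are exactly the continuity points of $\cM$. I expect the delicate part of the whole argument to be the branch-selection step, since everything preceding it is a fairly mechanical passage to the limit, whereas locating the phase transition is where the single-crossing hypothesis and the boundary analysis do the essential work.
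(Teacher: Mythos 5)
Your proposal follows essentially the same route as the paper's own proof outline: reduce to the sequence-level estimates of Theorem~\ref{thm:slm} to obtain the fixed-point and I-MMSE relations in the limit, use monotonicity of $\cM$ and the boundary conditions at $\delta=0$ and $\delta\to\infty$ together with the single-crossing property to resolve the branch-selection ambiguity, and recover MMSE convergence from mutual information convergence via concavity. You also correctly identify branch selection as the step where the real work lies, which matches the paper's emphasis.
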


\subsubsection{Relationship with Other Methods} 
The use of an integral relationship defining the mutual information is reminiscent of generalized area theorems introduced by M\'{e}asson et al.\ in coding theory~\cite{measson:2009}. However, one of the key differences in the compressed sensing problem is that the conditional entropy of the signal does not drop to zero after the phase transition.

The authors' proof technique also differs from previous approaches that use system-wide interpolation methods to obtain one-sided bounds~\cite{mezard:2009,korada:2010} or that focus on special cases, such as sparse matrices~\cite{montanari:2006}, Gaussian mixture models~\cite{huleihel:2017}, or the detection problem of support recovery~\cite{reeves:2012, reeves:2013b}. After \cite{reeves:2016d,reeves:2016a}, Barbier et al.\ obtained similar results using a substantially different method~\cite{barbier:2016, barbier:2017c}. More recent work has provided rigorous results for the generalized linear model~\cite{barbier:2018a}.

% !TEX root = arxiv_main.tex

\section{Phase Transitions and Posterior Correlation} 
\label{sec:phase} 

A \emph{phase transition}  refers to an abrupt change in the macroscopic properties of a system.  In the context of thermodynamic systems, a phase transition may correspond to the transition from one state of matter to another (e.g.,  from solid to liquid or from liquid to gas). In the context of inference problems, a phase transition can be used to describe a sharp change in quality of inference.  For example, the channel coding problem undergoes a phase transition as the signal-to-noise ratio crosses a threshold because the decoder error probability transitions from $\approx \! 1$ to $\approx \! 0$ over a very small range of  signal-to-noise ratios. In the standard linear model, the asymptotic MMSE may also contain a jump discontinuity with respect to the fraction of observations. \index{phase transition}

In many cases, the existence of phase transitions in inference problems can be related to the emergence of significant correlation in the posterior distribution~\cite{mezard:2009}. In these cases, a small change in the uncertainty  for one variable (e.g., reduction in the posterior variance with a new observation) corresponds a change in the uncertainty for a large number of other variables as well. The net effect is a large change in the overall system properties, such as the MMSE. 

In this section, we show how the tools introduced in Section~\ref{sec:role_of_MI} can be used to provide a link between a measure of the average correlation in the posterior distribution and second-order differences in the mutual information for both the Gaussian channel and the standard linear model.  

\subsection{Mean-Squared Covariance} \label{sec:msc}
\index{mean-squared covariance}
Let us return to the general inference problem of estimating a random vector $\bX =(X_1, \dots, X_N)$ from observations $\bY = (Y_1, \dots, Y_M)$.  As discussed in Section~\ref{sec:role_of_MI}, the posterior covariance matrix $\cov(\bX \mid \bY )$  provides a geometric measure of the amount of uncertainty in the posterior distribution. One important function of this  matrix is the  MMSE, which corresponds to the expected posterior variance:
\begin{align}
\mmse(\bX  \mid \bY) = \sum_{n=1}^N \ex{ \var(X_n  \mid \bY)}.
\end{align}

Going beyond the MMSE, there is also important information contained in the off-diagonal entries, which describe the pairwise correlations.  A useful measure of this correlation is provided by the \emph{mean-squared covariance}: 
\begin{align}
\ex{ \left\|\cov(\bX \mid \bY) \right \|_F^2} = \sum_{k=1}^N \sum_{n=1}^N \ex{ \cov^2(X_k, X_n \mid \bY)} . \label{eq:msc}
\end{align}
Note that while the MMSE corresponds to $N$ terms, the mean-squared covariance corresponds to $N^2$ terms. If the entries in $\bX$ have bounded fourth moments (i.e.,  $\ex{ X_i^4} \le B$), sthen it follows from the Cauchy-Schwarz inequality that each summand on the right-hand side of \eqref{eq:msc} is upper bounded by $B$, and it can be verified that
\begin{align}
\frac{1 }{N} \left( \mmse(\bX  \mid \bY)  \right)^2\le   \ex{ \left\|\cov(\bX \mid \bY) \right \|_F^2} \le  N^2 B. \label{eq:msc_bnds}
\end{align}
The left inequality is tight when the posterior distribution is uncorrelated and hence the off-diagonal terms of the conditional covariance are zero.  The right inequality is tight when the off-diagonal terms are of the same order as the variance.

Another way to view the relationship between the MMSE and mean-squared covariance it to consider the spectral decomposition of the covariance matrix. Let $\Lambda_1 \ge \Lambda_2 \ge \dots \ge \Lambda_N$ denote the (random) eigenvalues of  $\cov(\bX \mid \bY)$.  Then we can write
\begin{align*}
\gtr( \cov(\bX \mid \bY)) &= \sum_{n =1}^N   \Lambda_n, \qquad 
\left\|\cov(\bX \mid \bY) \right\|_F^2=     \sum_{n =1}^N  \Lambda_n^2.
\end{align*} 
Taking the expectations of these random quantaties and rearranging terms, 
%Rearranging terms and normalizing by $N$, 
one finds that the mean-squared covariance can be decomposed into three nonnegative terms: 
\begin{align}
 \ex{ \left\|\cov(\bX \mid \bY) \right \|_F^2} 
%&=  \left(  \frac{\mmse(\bX \mid \bY)}{N} \right)^2  +   \ex{ \frac{1}{N} \sum_{n=1}^N \left( \Lambda_n - \bar{\Lambda} \right)^2}  + \var( \bar{\Lambda}),
&=  N \left(  \ex{ \bar{\Lambda}} \right)^2 + N  \var( \bar{\Lambda}) +    \sum_{n=1}^N \ex{ \left( \Lambda_n - \bar{\Lambda} \right)^2}  , \label{eq:cov_decomp}
\end{align}
where $\bar{\Lambda} = \frac{1}{N} \sum_{n=1}^N \Lambda_n $ denotes the arithmetic  mean of the eigenvalues.  The first term on the right-hand side corresponds to the square of the MMSE and is equal to the lower bound in \eqref{eq:msc_bnds}. The second term on the right-hand side corresponds to the variance of  $\frac{1}{N}\gtr(\cov(\bX \mid \bY))$ with respect to the randomness in $\bY$. This term is equal to zero if $\bX$ and $\bY$ are jointly Gaussian.  The last term corresponds to the expected variation in the eigenvalues. If a small number of eigenvalues are significantly larger than the others then  it is possible for this term to be $N$ times larger than the first term. When this occurs, most of the uncertainty in the posterior distribution is concentrated on a low-dimensional subspace.

\subsection{Conditional MMSE Function and its Derivative}

The relationship between phase transitions and correlation in the posterior distribution can be made precise using the properties of the Gaussian channel discussed in Section~\ref{sec:I_MMSE}.  Given a random vector $\bX \in \mathbb{R}^N$ and a random observation $\bY \in \mathbb{R}^M$,  the \emph{conditional MMSE function} is defined by
\begin{align}
M_{\bX \mid \bY}(s) & \triangleq \frac{1}{N} \ex{ \| \bX - \ex{\bX \mid \bY, \bZ(s)}\|^2}, \label{eq:M_cond}
\end{align}
where  $\bZ (s) = \sqrt{s} \, \bX +\bW$ is a new observation of $\bX$ from an independent Gaussian noise channel~\cite{reeves:2017e}. From the expression for the derivative of the MMSE in \eqref{eq:Mprime}, it can be verified that the derivative of the conditional  MMSE function is  
\begin{align}
M'_{\bX \mid \bY}(s) &=  - \frac{1}{N} \ex{\left \|  \cov(\bX \mid \bY,  \bZ (s) ) \right\|_F^2}. \label{eq:Mp_cond}
\end{align}
Here, we recognize that the right-hand side is proportional to the mean-squared covariance associated with the pair of observations $(\bY, \bZ(s))$. Meanwhile, the left-hand side describes the change in the MMSE associated with a small increase in the signal-to-noise ratio of the Gaussian channel.

In Section~\ref{sec:good_codes} we saw that a phase transition in the channel coding problem for the Gaussian channel corresponds to a jump discontinuity the MMSE. More generally, one can say that the inference problem defined by the pair $(\bY, \bZ(s))$ undergoes a phase transition  whenever  $M_{\bX \mid \bY}(s)$ has a jump discontinuity in the large-$N$ limit. If such a phase transition occurs, then it implies that the magnitude of $M'_{\bX \mid \bY}(s)$ is increasing without bound. From \eqref{eq:Mp_cond}, we see that this also implies significant correlation in the posterior distribution. 

Evaluating the conditional MMSE function and its derivative at $s = 0$ provides expressions for the MMSE and mean-squared covariance associated with the orignal observation model:
%~\cite{reeves:2017e}
\begin{align}
M_{\bX \mid \bY}(0) &= \frac{1}{N} \mmse(\bX  \mid \bY) \\
M'_{\bX \mid \bY}(0) &=  - \frac{1}{N} \ex{\left \|\cov(\bX \mid \bY) \right \|_F^2}.
\end{align}
In light of the discussion above,  the mean-squared covariance can be interpreted as the rate of MMSE change with $s$ that occurs when one is presented with an independent observation of $\bX$ from a Gaussian channel with infinitesimally small signal-to-noise ratio. Furthermore, we see that significant correlation in the posterior distribution corresponds to a jump discontinuity in the large-$N$ limit of  $M_{\bX \mid \bY}(s)$ at the point $s  = 0$.

\subsection{Second-Order Differences of the Information Sequence}

Next, we  consider some further properties of the incremental-mutual information sequence introduced in Section~\ref{sec:inc_mi}. For any observation model satisfying the conditional independence condition in \eqref{eq:cond_ind} the second-order difference sequence given in \eqref{eq:Ipp_alt} can be expressed as
\begin{align}
I''_{m} & =   - \frac{1}{M!}  \sum_{\pi }  I(Y_{\pi(m+2)}; Y_{\pi(m+1)} \mid Y_{\pi(1)}, \dots Y_{\pi(m)}) . 
\end{align}
Note that each summand is a measure of the \emph{pairwise dependence} in the posterior distribution of new measurements. If the pairwise dependance is large (on average) then it means that there is a significant decrease in the first-order difference sequence $I'_m$.

The monotonicity and  non-negativity of the first-order difference  sequence  imposes some important constraints on the second-order difference sequence. For example,  the number of terms for which $|I''_m|$ is ``large'' can be upper bounded in terms of the information provided by a single observation.  The following result provides a quantitative description of this constraint.  

\begin{theorem}\label{thm:card_bnd} 
For any observation model satisfying the conditional independence condition in \eqref{eq:cond_ind} and positive number  $T$, the second-order difference sequence $\{ I''_m\}$ satisfies 
\begin{align}
| \left\{ m  \, : \,   |I''_m| \ge  T  \right \}|  & \le I_1 / T ,\label{eq:card_bnd} 
\end{align}
where $I_1 = \frac{1}{M} \sum_{m=1}^M I(\bX ; Y_m)$ is the first term in the information sequence. 
\end{theorem}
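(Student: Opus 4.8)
The plan is to exploit the two structural facts already established for the information sequence under the conditional independence condition~\eqref{eq:cond_ind}: first, that the first-order differences $\{I'_m\}$ are nonnegative (each is a conditional mutual information, hence $\ge 0$), and second, that they are monotonically nonincreasing by Theorem~\ref{thm:mono_inc_I}. The monotonicity says precisely that $I''_m = I'_{m+1} - I'_m \le 0$ for every $m$, so the absolute value can be removed with a sign flip, $|I''_m| = I'_m - I'_{m+1}$. This is the key observation, because it converts the sum of the $|I''_m|$ into a telescoping sum.

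Next I would sum over all valid indices. Since $I''_m$ is defined for $m = 0, 1, \dots, M-2$, telescoping gives
\begin{align}
\sum_{m=0}^{M-2} |I''_m| = \sum_{m=0}^{M-2}\left( I'_m - I'_{m+1}\right) = I'_0 - I'_{M-1} \le I'_0, \notag
\end{align}
where the final inequality uses $I'_{M-1} \ge 0$. It then remains to identify $I'_0$ with $I_1$. By definition $I'_0 = I_1 - I_0 = I_1$, since $I_0 = 0$. Moreover, averaging over permutations makes each index equally likely to be presented first, so $I_1 = \frac{1}{M!}\sum_{\pi} I(\bX; Y_{\pi(1)}) = \frac{1}{M}\sum_{m=1}^M I(\bX; Y_m)$, matching the quantity named in the statement. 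Thus $\sum_m |I''_m| \le I_1$.

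The final step is a Markov-type counting argument. Writing $S = \{m : |I''_m| \ge T\}$, each index in $S$ contributes at least $T$ to the nonnegative sum $\sum_m |I''_m|$, so $|S|\, T \le \sum_{m \in S} |I''_m| \le \sum_m |I''_m| \le I_1$. Dividing by $T$ yields the claimed bound $|S| \le I_1/T$.

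There is no substantial obstacle here: the whole argument is a telescoping sum followed by Markov's inequality, and it rests entirely on the nonnegativity and monotonicity of $\{I'_m\}$ supplied by Theorem~\ref{thm:mono_inc_I}. The only points requiring care are bookkeeping, namely confirming that the boundary term $I'_{M-1}$ is nonnegative so it can be discarded, and verifying the permutation-averaging identity $I'_0 = I_1$ so that the bound is phrased in terms of the single-observation information rather than a raw telescoping endpoint.
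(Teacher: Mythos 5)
Your proof is correct and follows essentially the same route as the paper's: both use the non-positivity of $I''_m$ from Theorem~\ref{thm:mono_inc_I} to telescope the sum of $|I''_m|$ down to $I'_0 - I'_{M-1} \le I'_0 = I_1$ and then apply a Markov-type counting bound. The paper simply merges the two steps by bounding the indicator of $\{|I''_m| \ge T\}$ directly by $-I''_m/T$ before summing, which is the same argument in a slightly different order.
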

\begin{proof}
The monotonicity of the first-order difference (Theorem~\ref{thm:mono_inc_I}) means that $I''_m$ is non-positive, and hence the indicator function of the event $\{|I''_m | \ge T\}$ is upper bounded by $-  I''_m /T$. Summing this inequality over $m$, we obtain
\begin{align*}
|\left\{ m  \, : \,   |I''_m| \ge  T  \right\}|  = \sum_{m=1}^{M-2} \one_{[T,\infty)} ( | I''_m| )  \le    \sum_{m=1}^{M-2}  - I''_m /T  =  \left( I'_0 - I'_{M-1} \right)/T .% \le  \frac{1}{T} I_1 
\end{align*}
Noting that $I'_0= I_1$ and $I'_{M-1} \ge 0$ completes the proof. 
%Recalling the the first-order difference is non-negative completes the proof. 
\end{proof}

An important property of Theorem~\ref{thm:card_bnd} is that for many problems of interest, the term $I_1$ does not depend on the total number of observations $M$. For example, in the standard linear model with iid measurement vectors, the upper bound in Theorem~\ref{thm:Ip_UB} gives  $I_1 \le \frac{1}{2} \log(1 + M_{\bX}(0))$. Some implications of these results are discussed in the next section. 

\subsubsection{Implications for the Standard Linear Model} 
One of the key steps in the authors' prior work on the standard linear model \cite{reeves:2016a,reeves:2016,reeves:2016d} is  the following inequality, which relates the second-order difference sequence and the mean-squared covariance.

\begin{theorem}\label{thm:cov_to_Ipp}
Consider the standard linear model \eqref{eq:linear_model_1} with iid Gaussian measurement vectors $\bA_m \sim \normal(0, N^{-1} \Id_N)$. If the entries of  $\bX$ are independent with bounded fourth-moment $\ex{ X_n^4} \le B$, then the mean-squared covariance satisfies
\begin{align}
\frac{1}{ N^2} \ex{ \left\| \cov(\bX \mid Y^m , \bA^m ) \right\|_F^2 }  \le C_B \,    \left| I''_m \right|^{1/4}, \label{eq:Ipp_one_fourth}
\end{align}
for all integers $N$ and $m = 1, \dots, M$ where $C_B$ is a constant that depends only on the fourth-moment upper bound $B$. 
\end{theorem}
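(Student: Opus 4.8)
The plan is to recast both sides of \eqref{eq:Ipp_one_fourth} in terms of two \emph{fresh} observations. Writing $\bC = \cov(\bX \mid Y^m, \bA^m)$ for the posterior covariance (a random matrix depending on $(Y^m,\bA^m)$), the two new noiseless measurements $\langle \bA_{m+1}, \bX\rangle$ and $\langle \bA_{m+2}, \bX \rangle$ have conditional covariance $\bA_{m+1}^{T} \bC\, \bA_{m+2}$, which is also the conditional covariance of the noisy pair $(Y_{m+1}, Y_{m+2})$ since the noise is independent. Because $\bA_{m+1}, \bA_{m+2} \sim \normal(0, N^{-1}\Id_N)$ are independent of $\bC$, a direct second-moment computation gives $\ex{(\bA_{m+1}^{T} \bC\, \bA_{m+2})^2} = \frac{1}{N^2} \ex{\|\bC\|_F^2}$. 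Recalling from Section~\ref{sec:slm_iid} that $|I''_m| = I(Y_{m+1}; Y_{m+2} \mid Y^m, \bA^{m+2})$, the theorem reduces to the scalar statement that the expected squared conditional covariance of the pair $(Y_{m+1}, Y_{m+2})$ is at most $C_B$ times the fourth root of their expected conditional mutual information.

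Next I would prove the pointwise heart of the argument: for a fixed realization of $(Y^m, \bA^{m+2})$, the conditional mutual information $I$ of the two observations is lower bounded by a power of their covariance $\sigma$. A naive appeal to the Gaussian relation $I \ge \tfrac{1}{2}\rho^2$ fails, because for a spiky signal a large correlation coefficient can coexist with arbitrarily small mutual information; ruling this out is exactly what the bounded fourth moment is for. Instead I would lower bound $I = \DKL{P}{Q}$ (joint law versus product of marginals) using Pinsker's inequality together with a bounded test function $f$ proportional to a clipped product of the centered observations. Such an $f$ detects $\sigma$ up to a tail error, and that error is controlled by the conditional fourth moments of the observations via Markov's inequality. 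Optimizing the clipping level then yields a pointwise inequality of the form $\sigma^2 \le c\, B_\omega\, I^{1/2}$, where $B_\omega$ is a local fourth-moment factor depending on the realization.

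The final step is to take expectations and aggregate, and this is where the moment budget dictates the exponent. The factor $B_\omega$ is essentially the conditional fourth moment of $\langle \bA_{m+1}, \bX\rangle - \ex{\langle \bA_{m+1}, \bX\rangle \mid Y^m,\bA^{m+1}}$; since $\bA_{m+1}$ is Gaussian and $\ex{X_n^4} \le B$, one can control its first moment, $\ex{B_\omega} \le C_B$, whereas controlling $\ex{B_\omega^2}$ would require an eighth moment of the entries, which is not assumed. Applying Hölder's inequality to $\ex{\sigma^2} \le c\,\ex{B_\omega\, I^{1/2}}$ under this limited control, rather than the Cauchy--Schwarz split that the pointwise bound would otherwise invite, forces the exponent on $\ex{I} = |I''_m|$ down from $1/2$ to $1/4$; combining with the identity $\ex{\sigma^2} = \frac{1}{N^2}\ex{\|\bC\|_F^2}$ from the first paragraph then gives the stated inequality.

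I expect the main obstacle to be the pointwise information-versus-covariance bound in the non-Gaussian regime. The difficulty is intrinsic: covariance is a second-moment functional while mutual information is not, and absent any moment constraint the two are essentially unrelated from below. Making the clipped-test-function-plus-Pinsker estimate rigorous---tracking how the conditional fourth moments enter the tail bound and how the optimal clipping level trades off against them---is the crux, and the fact that only fourth-order moment information is available is precisely what propagates through the aggregation to produce the weak $1/4$ exponent.
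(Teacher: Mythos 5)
Your two-step architecture is the same one the paper intends (and that the references \cite{reeves:2016a,reeves:2016} carry out): first the identity
\begin{align*}
\frac{1}{N^2}\,\ex{\left\|\cov(\bX \mid Y^m,\bA^m)\right\|_F^2} \;=\; \ex{\cov^2\left(Y_{m+1},Y_{m+2}\mid Y^m,\bA^{m+2}\right)},
\end{align*}
which you derive correctly from the isotropy of $\bA_{m+1},\bA_{m+2}$ and their independence from the posterior covariance, together with the expression $I''_m = -I(Y_{m+1};Y_{m+2}\mid Y^m,\bA^{m+2})$; and second a generic bound on the covariance of a pair of random variables in terms of their mutual information and fourth moments, obtained by truncation plus a total-variation/Pinsker estimate. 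Your diagnosis of why a correlation-coefficient argument fails and why the fourth-moment hypothesis is the binding constraint is also on target.

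The gap is in the aggregation step. From the pointwise bound $\sigma_\omega^2 \le c\,B_\omega\, I_\omega^{1/2}$ you pass to $\ex{\sigma^2}\le c\,\ex{B_\omega I_\omega^{1/2}}$ and assert that H\"older's inequality, using only $\ex{B_\omega}\le C_B$, yields $C_B\,(\ex{I_\omega})^{1/4}$. No H\"older pairing does this: a split of the form $\ex{B_\omega^p}^{1/p}\ex{I_\omega^{q/2}}^{1/q}$ requires $p>1$, hence moments of $B_\omega$ beyond the first (equivalently, moments of the $X_n$ beyond the fourth, as you yourself observe), while any split that uses only $\ex{B_\omega}$ leaves an uncontrolled factor such as $\ex{B_\omega I_\omega}^{1/2}$ or an essential supremum of $I_\omega$, which is unbounded. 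More generally, because $B_\omega$ is a \emph{conditional} fourth moment, every natural Cauchy--Schwarz or Young split of $\ex{B_\omega I_\omega^{1/2}}$ reintroduces sixth- or eighth-order moments of the signal entries. The resolution in \cite{reeves:2016} is to not optimize the truncation level realization-by-realization: the covariance--information inequality is proved directly in averaged form, with a single truncation level chosen as a function of the conditional mutual information $|I''_m|=\ex{I_\omega}$ and with the truncation tails controlled only after taking the full expectation, so that only unconditional fourth moments of $Y_{m+1},Y_{m+2}$ (hence only $\ex{X_n^4}\le B$) ever appear; the exponent $1/4$ emerges from that global optimization rather than from a H\"older step applied to the pointwise bound. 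As written, your third paragraph does not close, and this is precisely the technical crux you correctly anticipated.
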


Theorem~\ref{thm:cov_to_Ipp} shows that significant correlation in the posterior distribution implies pairwise dependence in the joint distribution of new measurements and, hence, a significant decrease in the first-order difference sequence $I'_m$. In particular, if the mean-squared covariance is order $N^2$ (corresponding to the upper bound in \eqref{eq:msc_bnds}), then the $|I''_m|$ is lower bounded by a constant. If we consider the large-$N$ limit in which the number of observations is parameterized by the fraction $\delta = m/N$, then an order one difference in $I'_m$ corresponds to a jump-discontinuity with respect to $\delta$. In other words, significant pairwise correlation implies a phase transition with respect to the fraction of observations. 

Viewed in the other direction, Theorem~\ref{thm:cov_to_Ipp} also shows that small changes in the first-order difference sequence imply that the average pairwise correlation is small. From Theorem~\ref{thm:card_bnd}, we see that this is, in fact, the typical situation. Under the assumptions of Theorem~\ref{thm:cov_to_Ipp},  it can be verified that 
\begin{align}
\left| \left\{ m  \, : \,  \ex{ \left\| \cov(\bX \mid Y^m , \bA^m ) \right\|_F^2 } \ge  N^{2-\eps/4}     \right \} \right|  & \le  \tilde{C}_B\, N^{\eps} 
\end{align}
for all $0 \le \eps \le 1$, where $\tilde{C}_B$ is a constant that depends only on the fourth-moment bound $B$.  In other words, the number of $m$-values for which the mean-squared covariance has the same order as the upper bound in \eqref{eq:msc_bnds}  must be sub-linear  in  $N$. This fact plays a key role in the proof of Theorem~\ref{thm:slm}; see \cite{reeves:2016a,reeves:2016} for details.

\iffalse
\subsubsection{Proof  Sketch of Theorem~\ref{thm:cov_to_Ipp}}
There are two key steps in the proof of Theorem~\ref{thm:cov_to_Ipp}. The first is given by the following identity, which relates the mean squared covariance to the covariance of two new measurements: 
\begin{align}
 \frac{1}{N^2} \ex{ \left\| \cov(\bX \mid Y^m , \bA^m ) \right\|_F^2 } = \ex{ \cov^2(Y_{m+1}, Y_{m+2} \mid Y^m , \bA^{m+2}   )  } .
 \end{align}
The second key step is a general upper bound the covariance of random variables terms of higher order moments and mutual information:  [GIVE MORE DETAILS HERE] 
\fi

\section{Conclusion} 
\label{sec:conclusion}

This article provides a tutorial introduction to high-dimensional inference and its connection to information theory.
The standard linear model is analyzed in detail and used as a running example.
The primary goal is to present intuitive links between phase transitions, mutual information, and estimation error.
To that end, we show how general functional properties (e.g., the chain rule, data-processing inequality, and I-MMSE relationship) of mutual information and MMSE can imply meaningful constraints on the solutions of challenging problems.
In particular, the replica prediction of the mutual information and MMSE is described and an outline is given for the authors' proof that it is exact in some cases.
We hope that the approach described here will allow this material to be accessible by a wider audience.

\subsection{Further Directions}
Beyond the standard linear model, there are other interesting high-dimensional inference problems that can be addressed using the ideas in this chapter. For example, one recent line of work has focused on multilayer networks, which consist of multiple stages of a linear transform followed by a non-linear (possibly random) function~\cite{manoel:2017a, fletcher:2018a, reeves:2017e}. There has also been significant work on bilinear estimation problems,  matrix factorization, and community detection~\cite{barbier:2016a,lelarge:2016,lesieur:2017, abbe:2017}. Finally, there has been some initial progress on optimal quantization for the standard linear model~\cite{kipnis:2017,kipnis:2018a}.

% !TEX root = arxiv_main.tex
%  Extensions to bayesian risk

\section{Appendix: Subset Response for the Standard Linear Model} 

\label{sec:rotation}

This section describes the mapping between the standard linear model and a signal-plus-noise response model for a subset of the observations. Recall the problem formulation
\begin{align}
\bY = \bA \bX + \bW
\end{align}
where $\bA$ is an $M \times N$ matrix. Suppose that we are interested in the posterior distribution of a subset   $S \subset \{1,\dots, N\}$ of the signal entries where the size of the subset $K = |S|$ is small relative to the signal length $N$ and the number of measurements $M$. Letting $S^c = \{1,\dots, N\} \backslash S$ denote the complement of $S$, the measurements can be decomposed as 
\begin{align}
\bY & = \bA_{S} \bX_{S}  + \bA_{S^c} \bX_{S^c}   + \bW , 
\end{align}
where $\bA_{S}$ is an $M \times K$ matrix corresponding to the columns of $\bA$ indexed by $S$ and $\bA_{S^c}$ is an $M \times (N-K)$ matrix corresponding to the columns indexed by the compliment of $S$.

This decomposition suggests an alternative interpretation of the linear model in which $\bX_S$ is a low-dimensional signal of interest and $\bX_{S^c}$ is a high-dimensional interference term. Note that  $\bA_{S}$ is a tall skinny matrix, and thus the noiseless measurements of $\bX_S$  lie in a $K$-dimensional subspace of the $M$-dimensional measurement space.

Next, we introduce a linear transformation of the problem that attempts to separate the signal of interest from the interference term. The idea is to consider the QR decomposition of the tall skinny matrix $\bA_{S}$ of the form
\[
\bA_S % = \bQ \bR 
= \underbrace{ \begin{bmatrix} \bQ_1 ,  \bQ_2  \end{bmatrix} }_{\bQ} \begin{bmatrix} \bR \\  \bm{0} \end{bmatrix},
\]
where $\bQ$ is an $M \times M$  orthogonal matrix ($\bQ \bQ^T = I$), $\bQ_1$ is $M \times K$, $\bQ_2$ is $M \times (M-K)$,  and $\bR$ is  an $K \times K$ upper triangular matrix whose diagonal entries are nonnegative. If $\bA_S$ has full column rank then the pair $(\bQ_1, \bR)$ is uniquely defined. The matrix  $\bQ_2$ can be chosen arbitrarily subject to the constraint $\bQ_2^T \bQ_2 = I - \bQ^T_1 \bQ_1$. To facilitate the analysis, we will assume that  $\bQ_2$ is chosen  uniformly at random over the set of matrices satisfying this constraint. 

Multiplication by $\bQ^T$ is a one-to-one linear transformation. The transformed  problem parameters are defined as:
\[
\tilde{\bY} \triangleq \bQ^T \bY , \qquad \bB \triangleq \bQ^T \bA_{S^c} , \qquad  \tilde{\bW} \triangleq \bQ^T \bW.
\]
At this point, it is important to note that the isotropic Gaussian distribution is invariant to orthogonal transformations. Consequently, the transformed noise $\tilde{\bW}$  has the same distribution as $\bW$ and is independent of everything else. Using the transformed parameters, the linear model can be expressed as
\begin{align}
\begin{bmatrix}
\tilde{\bY}_1\\
\tilde{\bY}_2
\end{bmatrix} 
& = \begin{bmatrix} \bR & \bB_{1} \\
\bm{0} & \bB_{2} 
\end{bmatrix} 
\begin{bmatrix}
\bX_{S}\\
\bX_{S^c}
\end{bmatrix} 
+ 
\begin{bmatrix}
\tilde{\bW}_1\\
\tilde{\bW}_2
\end{bmatrix},
\end{align}
where $\tilde{\bY}_1$ corresponds to the first $K$ measurements and $\tilde{\bY}_2$ corresponds to the remaining $(M-K)$ measurements. 

A useful property of the transformed model is that  $\tilde{\bY}_2$ is independent of the signal of interest $\bX_S$. This decomposition  motivates a  two-stage approach in which one first estimates $\bX_{S^c}$ from the data $(\tilde{\bY}_2, \bB_2)$ and then uses this estimate to ``subtract out'' the interference term in $\tilde{\bY}_1$. To be more precise, we define
\begin{align*}
\bZ %& \triangleq  \tilde{\bY}_1 -  \ex{ \tilde{\bY}_1  \mid \tilde{\bY}_2 , \bB} %\\
 & \triangleq  \tilde{\bY}_1 -  \bB_1 \ex{ \tilde{\bX}_{S^c}   \mid \tilde{\bY}_2 , \bB_2},
\end{align*}
to be the measurements $\tilde{\bY}_1$ after subtracting the conditional expectation of the interference term. Rearranging terms, one finds that the relationship between $\bZ$ and $\bX_S$ can be expressed succinctly as
\begin{align}
\bZ & = \bR \bX_{S} + \bV , \quad \bV \sim p(\bv \mid \tilde{\bY}_2, \bB) ,
\end{align}
where 
%s conditional expectation given $\tilde{\bY}_2$. Because the only connection 
\begin{align}
\bV & \triangleq \bB_{1}\left(  \bX_{S^c} - \ex{ \bX_{S^c}  \mid \tilde{\bY}_2, \bB_{2} } \right) + \tilde{\bW}_1, 
\end{align}
is the error due to both the interference and the measurement noise.

Thus far, this decomposition is quite general in the sense that it can be applied for any matrix $\bA$ and subset $S$ of size less than $M$. The key question at this point is whether the error term $\bV$ is approximately  Gaussian.

\bibliographystyle{IEEEtran} % uncomment for IEEE style

%\bibliographystyle{IEEEtran}
%\bibliography{short_names,citations} 
%\bibliography{/Users/galenreeves/Dropbox/Library/long_names,/Users/galenreeves/Dropbox/Library/library} 
% Generated by IEEEtran.bst, version: 1.14 (2015/08/26)

%\printindex

\end{document}